\documentclass[aps,pre,floats,floatfix,superscriptaddress,twocolumn,final,10pt]{revtex4-2}
\usepackage{amsmath}
\usepackage{amssymb}
\usepackage{amsthm}
\usepackage{bm}
\usepackage{enumitem}
\usepackage{orcidlink}
\usepackage{booktabs}

\theoremstyle{definition}
\newtheorem{definition}{Definition}
\newtheorem{theorem}{Theorem}
\newtheorem{proposition}{Proposition}
\newtheorem{lemma}{Lemma}

\renewcommand{\vec}[1]{\mathbf{#1}}
\newcommand{\divergence}[4]{\textnormal{d}^{#1}_{#2}\left(#3\, || \, #4\right)}
\newcommand{\divergencename}[2]{\textnormal{d}^{#1}_{#2}}
\newcommand{\entropy}[3]{\mathcal{H}^{#1}_{#2}\left(#3\right)}
\newcommand{\entropyname}[2]{\mathcal{H}^{#1}_{#2}}

\begin{document}
\title{Structure-aware divergences for comparing probability distributions}
\author{Rohit Sahasrabuddhe\;\orcidlink{0000-0002-2779-8310}}
\email[Corresponding author: ]{rohit.sahasrabuddhe@maths.ox.ac.uk}
\affiliation{Mathematical Institute, University of Oxford, Oxford, United Kingdom}
\affiliation{Institute for New Economic Thinking, University of Oxford, Oxford, United Kingdom}

\author{Renaud Lambiotte\;\orcidlink{0000-0002-0583-4595}}
\affiliation{Mathematical Institute, University of Oxford, Oxford, United Kingdom}
\affiliation{Complexity Science Hub, Vienna, Austria}

\begin{abstract}
    Many natural and social science systems are described using probability distributions over elements that are related to each other: for instance, occupations with shared skills or species with similar traits. Standard information theory quantities such as entropies and $f$-divergences treat elements interchangeably and are blind to the similarity structure. We introduce a family of divergences that are sensitive to the geometry of the underlying domain. By virtue of being the Bregman divergences of structure-aware entropies, they provide a framework that retains several advantages of Kullback-Leibler divergence and Shannon entropy. Structure-aware divergences recover planted patterns in a synthetic clustering task that conventional divergences miss and are orders of magnitude faster than optimal transport distances. We demonstrate their applicability in economic geography and ecology, where structure plays an important role. Modelling different notions of occupation relatedness yields qualitatively different regionalisations of their geographic distribution. Our methods also reproduce established insights into functional $\beta$-diversity in ecology obtained with optimal transport methods.
\end{abstract}

\date{\today}

\maketitle

\section{Introduction}
In many natural and social systems, probability distributions are defined over elements that are related to each other. Evidence from a variety of disciplines shows that this structure contains crucial information that cannot be discarded. Many researchers have advocated against the dichotomy of elements being identical or completely distinct. For instance, in evolutionary economic geography, the principle of relatedness posits that economic agents tend to expand their capability base by acquiring new capabilities that are similar to their existing ones. Relatedness is used to explain phenomena across scales: from the export baskets of countries~\cite{Hidalgo2007}, to how industries enter and exit regions~\cite{Frenken2007, Neffke2011}, and the productivity and diversification of cities~\cite{Bettencourt2014, Daniotti2025}. For a recent review, see Hidalgo et al.~\cite{Hidalgo2018}.

In ecology, measuring the diversity of an ecosystem from its species composition is a long-studied task~\cite{Jost2006}. Ecologists have devised many structure-aware indices to capture the intuition that populations with dissimilar species are more diverse. For recent surveys of the mathematics of diversity indices, see Daly et al.~\cite{Daly2018} and Leinster~\cite{Leinster2021}. Species similarity is particularly important in functional ecology, which focuses on the functional role of species in the ecosystem rather than their taxonomic classification~\cite{Daz2001, Tilman2001}. Functions are often proxied by traits~\cite{Violle2007} and species with similar traits are considered to play similar roles in the community~\cite{McGill2006}.

Our work is motivated by information theory~\cite{Cover2005}, which provides a suite of tools to analyse distributions. Consider the tasks of measuring (i) the spread of a distribution and (ii) the dissimilarity of a pair of distributions. Quantities such as Shannon entropy~\cite{Shannon1948} and its generalisations~\cite{Renyi1961,Havrda1967} have been used to measure spread in a variety of disciplines, including economic geography~\cite{Frenken2007} and ecology~\cite{Hill1973,Patil1982}. Distance-like functions such as Kullback-Leibler (KL) divergence~\cite{Kullback1951} and other $f$-divergences~\cite{Renyi1961, Csiszar1967} are common ways to compare distributions. Entropies and divergences have a convenient geometry when used together as a toolkit for analysing distributions~\cite{Cover2005, Amari2016}. However, they are blind to the similarity structure of their domains. 

There are many approaches to incorporate similarity into entropy or diversity measures, from Rao's foundational work~\cite{Rao1982} and its application to graphs~\cite{Devriendt2022} to Leinster and Cobbold's~\cite{Leinster2012} recent contributions. However, we lack corresponding divergences that can be used to compare distributions. In this article, we fill this gap by introducing a novel family of structure-aware divergences. Building on KL divergence's geometric connection with Shannon entropy, we construct them from Ricotta and Szeidl's structure-aware entropies~\cite{Ricotta2006}.

The comparison of distributions over sets with structure dates back to Monge~\cite{Monge1781}, whose work in the 1700s is the root of Optimal Transport (OT) theory~\cite{Villani2009}. OT measures the minimum cost of transforming one distribution into another, where costs are defined using the underlying metric space. Computing OT distances such as the Wasserstein metric~\cite{Kantorovich1939, Vaserten1969} involves solving a costly optimisation problem. 
In contrast, our family of divergences has a closed form expression, which allows efficient computation and analytical study. By virtue of being Bregman divergences~\cite{Bregman1967}, they have several convenient properties~\cite{Amari2016, Banerjee2004} that make them suitable for clustering~\cite{Banerjee2004} and projections~\cite{Dhillon2008}. 

In a synthetic clustering experiment, we show that structure-aware divergence recovers planted patterns that are lost when structure is discarded. In a separate numerical experiment, we demonstrate that it is substantially faster to compute than the Wasserstein distance. Applied to the geographic distribution of occupations in England and Wales, structure-aware methods reveal how different definitions of relatedness lead to distinct regionalisations. Analysing data from vegetation in the Rutor glacier, we reproduce insights on $\beta$-diversity previously established with OT methods.

\section{Results}
\subsection{Preliminaries}
Through this article, we denote vectors and matrices in bold and their entries in regular font. For instance, $p_i$ is the $i^\textnormal{th}$ entry of vector $\vec{p}$. $\vec{1}$ represents the vector of ones and $\vec{I}$ the identity matrix. Their dimension will be clear in context. We often employ vector notation to avoid clunky summations. We will use $n$ and $m$ for integers.

Let $\mathcal{X}$ be a finite set with $n \geq 1$ elements. We represent a probability distribution over $\mathcal{X}$ as a vector $\vec{p}$.

\begin{definition}
    The \textbf{probability simplex} $\Delta_n \subset \mathbb{R}^n$ contains all the probability distributions over $n$ elements.
    \begin{equation}
        \Delta_n = \left\{ \vec{p} \in \mathbb{R}^n \mid \vec{1}^\top\vec{p}=1, \; \vec{p} \geq 0  \right\}
    \end{equation}
    $\Delta_n^\circ$ is its relative interior, containing all the distributions of full support.
    \begin{equation}
        \Delta_n^\circ = \left\{ \vec{p} \in \Delta_n \mid \vec{p}>0 \right\}
    \end{equation}
\label{def:probability_simplex}
\end{definition}

We write the pairwise similarities between the elements of $\mathcal{X}$ in a matrix $\vec{Z}$. Our methods are agnostic of how similarity is defined, as long as it has the following properties.

\begin{definition}
    A matrix $\vec{Z} \in \mathbb{R}^{n \times n}$ is a \textbf{similarity matrix} if $0 \leq \vec{Z} \leq 1$, $\vec{Z}^\top = \vec{Z}$, and $Z_{ii}=1$.
\end{definition}

Similarity is positive, symmetric, and bounded. Each element is maximally similar to itself. The case of unrelated elements corresponds to $\vec{Z}=\vec{I}$.

\subsection{Structure-aware entropy}

\begin{figure}
    \centering
    \includegraphics{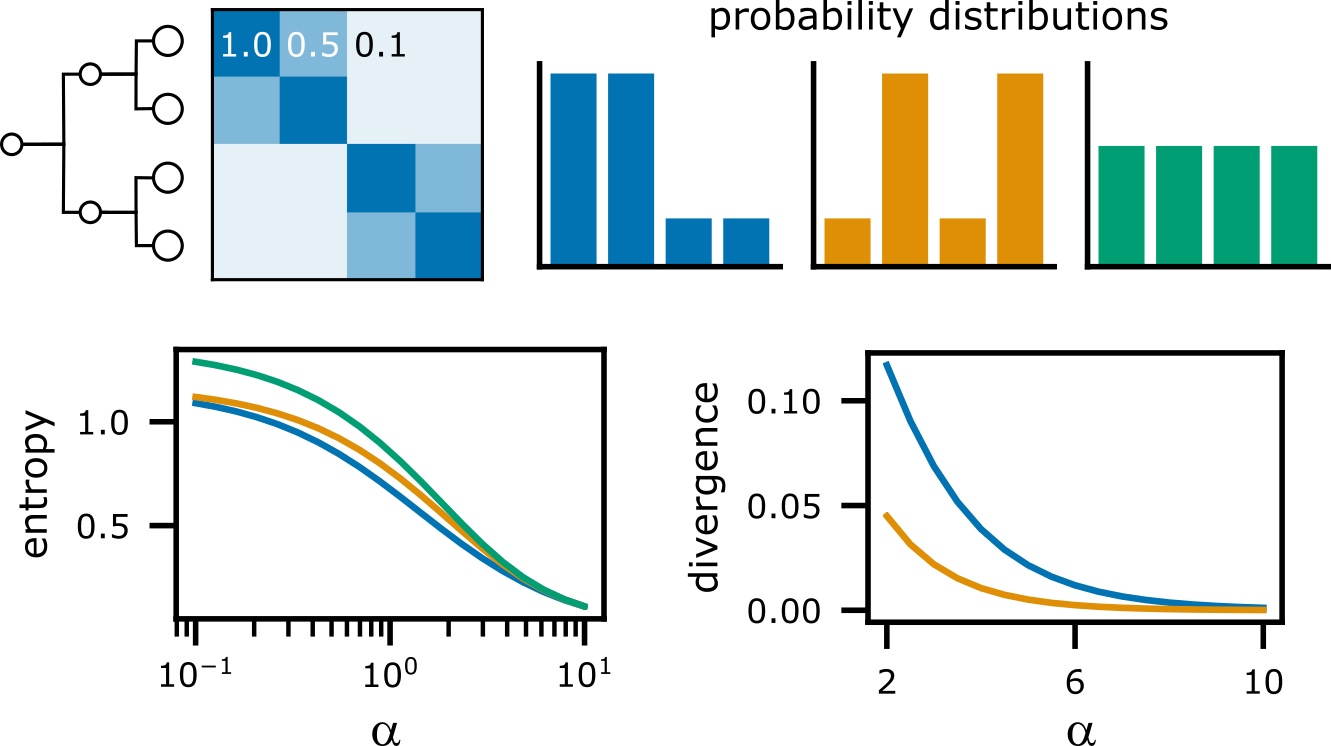}
    \caption{\textbf{Illustrative example.} $\vec{Z}$ is a similarity matrix representing a two-level hierarchy where four elements group into `pairs' (top left). We consider three distributions over them: one where both elements of a pair have high probability (blue), one where one element each from the pairs have high probability (orange), and the uniform distribution (green; top right). The orange distribution has a higher entropy than the blue one since its high probability elements are dissimilar (bottom left; note the log scale for $\alpha$). Similarly, the orange distribution diverges less from the uniform distribution than the blue one does (bottom right; note that divergence is only defined for $\alpha\geq2$). Entropies and divergences blind to structure would not differentiate between the two.}
    \label{fig:schematic}
\end{figure}

How do we measure the spread of a distribution $\vec{p}$? Intuitively, a distribution is more spread out if high probability elements are dissimilar. The Simpson concentration~\cite{Simpson1949} of $\vec{p}$ is the probability that two elements chosen independently at random according to $\vec{p}$ are identical. The same quantity is known as the Herfindahl–Hirschman index in economics. A natural structure-aware generalisation of this idea is the expected dissimilarity of two elements, proposed by Rao~\cite{Rao1982}. Ricotta and Szeidl~\cite{Ricotta2006} generalised this further by incorporating structure into the Havrda-Charv\'at (HC) entropies~\cite{Havrda1967}.

\begin{definition}
    The \textbf{structure-aware HC entropy} of order $\alpha \in [0, \infty)$ of $\mathbf{p} \in \Delta_n$ is
    \begin{equation}
        \entropy{\vec{Z}}{\alpha}{\vec{p}} = \frac{1}{\alpha - 1} \left( 1 - \vec{p}^\top (\vec{Zp})^{\alpha -1} \right),
    \end{equation}
    for $\alpha \neq 1$, and
    \begin{equation}
        \entropy{\vec{Z}}{1}{\vec{p}}= - \vec{p}^\top \ln (\vec{Z p}),
    \end{equation}
    where $0 \ln 0 := 0$ and the powers and logarithm are element-wise.
\label{def:structure_aware_entropy}
\end{definition}

$\entropyname{\vec{Z}}{2}$ is Rao's index and we recover the original HC entropies by setting $\vec{Z}=\vec{I}$, including the Gini-Simpson index for $\alpha=2$ and Shannon entropy for $\alpha=1$. In Fig.~\ref{fig:schematic}, the blue and orange distributions have identical entropy without structure, while $\entropyname{\vec{Z}}{\alpha}$ is higher for the orange one since it assigns high probability to dissimilar elements.

In information theoretic fashion, structure-aware entropy can be interpreted in terms of surprise~\cite{Leinster2012}. The `ordinariness' of $i$ is $(\vec{Zp})_i = \sum_j Z_{ij}p_i \in [0,1]$ -- its expected similarity to an element picked at random according to $\vec{p}$. An element is ordinary if it is abundant or similar to abundant elements. The surprise of an element is a decreasing function of its ordinariness. When this function is
\begin{equation}
    x \mapsto \begin{cases}
        \frac{1}{\alpha - 1} \left( 1 - x^{\alpha-1} \right) & \textnormal{for } \alpha \neq 1,\\
        - \ln x & \textnormal{for } \alpha=1,
    \end{cases}
\label{eq:surprise}
\end{equation}
$\entropyname{\vec{Z}}{\alpha}$ is the expected surprise. Increasing $\alpha$ decreases surprise, affecting the less ordinary elements more (see Methods). Thus, $\alpha$ can be used to tune the sensitivity of $\entropyname{\vec{Z}}{\alpha}$ to rare elements.

Leinster and Cobbold~\cite{Leinster2012} generalised other measures such as the R\'enyi entropies~\cite{Renyi1961} and Hill numbers~\cite{Hill1973} to be structure-aware. However, even in the structure-blind case, only the HC entropies are concave. We will see that this is crucial to our methods.

\subsection{Structure-aware divergence}
How do we compare two distributions? To formulate a structure-aware analogue of KL divergence~\cite{Kullback1951}, we turn to its geometric connection with Shannon entropy: it is the Bregman divergence~\cite{Bregman1967} induced by negative entropy.

\begin{definition}
    Let $\Omega \subseteq \mathbb{R}^n$ be a convex set and $\phi:\Omega \to \mathbb{R}$ a strictly convex function that is differentiable in its relative interior $\textnormal{ri}(\Omega)$. The associated \textbf{Bregman divergence} $d_\phi:\Omega \times \textnormal{ri}(\Omega) \to [0, \infty)$ is
    \begin{equation}
        d_\phi\left( \omega_1 \mid\mid\omega_2 \right) = \phi(\omega_1) - \phi(\omega_2) - \langle \nabla \phi(\omega_2) ,\; \omega_1 - \omega_2 \rangle.
    \end{equation}
\label{def:bregman_divergence}
\end{definition}

The divergence of $\omega_1$ from $\omega_2$ is the difference between $\phi(\omega_1)$ and its first-order approximation from $\omega_2$. Many other distance-like functions such as squared Mahalanobis distance, Itakura-Saito distance, and logistic loss are also Bregman divergences~\cite[Table 1]{Banerjee2004} and they have several convenient properties~\cite[Appendix A]{Banerjee2004}.

This machinery provides a way to construct a divergence from an entropy. However, it requires convexity and unlike negative Shannon entropy, $-\entropyname{\vec{Z}}{\alpha}$ is not strictly convex in general. Building on Ricotta and Szeidl~\cite{Ricotta2006}, we identify sufficient conditions.

\begin{theorem}
    $\entropyname{\vec{Z}}{\alpha}$ is strictly concave in $\Delta_n^\circ$ for $\alpha \geq 2$ if $\vec{Z} \succ 0$.
\label{thm:concavity}
\end{theorem}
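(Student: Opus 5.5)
The plan is to work with the function $f(\vec{p}) = \vec{p}^\top (\vec{Zp})^{\beta}$, where $\beta = \alpha - 1 \geq 1$. Since $\alpha - 1 > 0$, strict concavity of $\entropyname{\vec{Z}}{\alpha}$ on $\Delta_n^\circ$ is equivalent to strict convexity of $f$ there. I will prove this by second derivatives along lines. Fix $\vec{p} \in \Delta_n^\circ$ and a nonzero direction $\vec{v}$ with $\vec{1}^\top \vec{v} = 0$. Put $\vec{q} = \vec{Zp}$ and $\vec{w} = \vec{Zv}$. Because $\vec{Z} \geq 0$, $Z_{ii} = 1$ and $\vec{p} > 0$, we have $q_i \geq p_i > 0$, so all powers of $\vec{q}$ are smooth near $\vec{p}$. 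Because $\vec{Z} \succ 0$, we have $\vec{w} \neq 0$. Setting $g(t) = \sum_i (p_i + t v_i)(q_i + t w_i)^{\beta}$, a direct computation gives
\begin{equation}
    g''(0) = \beta \sum_i q_i^{\beta-2}\left( 2 v_i w_i q_i + (\beta - 1) p_i w_i^2 \right).
\end{equation}
It then suffices to show $g''(0) > 0$ for every such $\vec{v}$, because strict positivity of the second directional derivative on the open convex set $\Delta_n^\circ$ gives strict convexity.

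First I check the base case $\beta = 1$, i.e. $\alpha = 2$. There $g''(0) = 2\vec{v}^\top \vec{Z v} > 0$ immediately from $\vec{Z} \succ 0$, which recovers concavity of Rao's index. For $\beta > 1$ the second term $(\beta-1)\sum_i p_i q_i^{\beta-2} w_i^2$ is nonnegative, and strictly positive since $\vec{w} \neq 0$ and $\vec{p} > 0$. The work is in controlling the first term $2\sum_i q_i^{\beta-1} v_i w_i = 2\vec{v}^\top \vec{D}\vec{Z v}$, with $\vec{D} = \mathrm{diag}(\vec{q}^{\beta-1})$. The matrix $\vec{DZ}$ has positive eigenvalues, since it is similar to $\vec{D}^{1/2}\vec{Z}\vec{D}^{1/2} \succ 0$. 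However, its symmetric part need not be positive semidefinite, so this term can be negative on its own.

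The main obstacle is therefore to show that the weighted cross term is dominated by the two pieces of information available: $\vec{v}^\top\vec{Zv} > 0$, and the positive $\vec{w}$-quadratic term weighted by $\vec{p}$. My first attempt is to change variables to $\vec{w}$, writing $\vec{v} = \vec{Z}^{-1}\vec{w}$. This turns $g''(0)/\beta$ into the quadratic form
\begin{equation}
    \vec{w}^\top\left(\vec{Z}^{-1}\vec{D} + \vec{D}\vec{Z}^{-1} + (\beta - 1)\,\mathrm{diag}(\vec{p}\,\vec{q}^{\beta-2})\right)\vec{w},
\end{equation}
and I would try to prove this matrix is positive definite. The structure to exploit is that $\vec{q} = \vec{Zp}$ ties the weights $\vec{D}$ to $\vec{p}$ through $\vec{Z}$ itself, and that $q_i \geq p_i$ entrywise. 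I would look for a completion of squares, or a Cauchy--Schwarz bound in the inner product induced by $\vec{Z}$, that absorbs the antisymmetric part of $\vec{Z}^{-1}\vec{D}$ into the diagonal term.

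If that does not close, I would follow Ricotta and Szeidl and instead view $f$ as an expectation of a convex function of the ordinariness $\vec{q}$. In that route the constraint $\vec{1}^\top\vec{v} = 0$, which I have not yet used, may be what makes the argument work. I expect this domination step to be where essentially all of the difficulty, and any additional hypothesis on $\vec{Z}$, actually enters.
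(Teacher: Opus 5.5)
\paragraph{The domination step fails, because the statement is false for $\alpha>2$.} Your reduction to strict convexity of $f(\mathbf{p})=\mathbf{p}^\top(\mathbf{Zp})^{\beta}$, your formula for $g''(0)$, and your treatment of $\alpha=2$ are all correct. For $\alpha>2$, however, you stop at the one step that carries the theorem, and that step cannot be completed. Your observation that the symmetric part of $\mathbf{DZ}$ can be indefinite is the crux. Neither the $(\beta-1)$ term nor the constraint $\mathbf{1}^\top\mathbf{v}=0$ compensates for it. Take
\[
\mathbf{Z}=\begin{pmatrix}1&0&\tfrac12&0\\ 0&1&\tfrac12&0\\ \tfrac12&\tfrac12&1&0\\ 0&0&0&1\end{pmatrix},
\qquad \mathbf{p}=(1-3\eta,\eta,\eta,\eta),\qquad \mathbf{v}=(0,-4,1,3).
\]
This $\mathbf{Z}$ is a similarity matrix with eigenvalues $1\pm\tfrac{\sqrt2}{2},1,1$, so $\mathbf{Z}\succ0$. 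Here $\mathbf{q}=(1-\tfrac52\eta,\tfrac32\eta,\tfrac12,\eta)$ and $\mathbf{w}=(\tfrac12,-\tfrac72,-1,3)$. For $\alpha=3$ your own formula gives
\[
g''(0)=2\sum_i(2q_iv_iw_i+p_iw_i^2)=-\tfrac32+163\eta,
\]
which is negative for $\eta<3/326$. Since $f$ is cubic, this is exact. At $\eta=10^{-3}$ and $t=2\cdot10^{-4}$ one finds $f(\mathbf{p}+t\mathbf{v})+f(\mathbf{p}-t\mathbf{v})-2f(\mathbf{p})\approx-5.35\cdot10^{-8}$, with all three points in $\Delta_4^\circ$. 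So $\mathcal{H}^{\mathbf{Z}}_3$ violates midpoint concavity in the interior.

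The mechanism is general. The cross term $2\sum_iq_i^{\beta-1}v_iw_i$ weights element 3 positively. Meanwhile $w_3=\tfrac12v_2+v_3$ is pushed negative through element 2, whose own weight $q_2^{\beta-1}$ vanishes as $\eta\to0$. Element 4 absorbs the constraint $\mathbf{1}^\top\mathbf{v}=0$ at no cost. Along $\mathbf{v}=(0,-s,1,s-1)$ one gets $g''(0)/\beta\to2^{2-\beta}(1-\tfrac s2)+\tfrac{\beta-1}{4}$ as $\eta\to0$, which is negative for large $s$ whenever $\beta>1$. Replacing the zero entries of $\mathbf{Z}$ by a small $\epsilon>0$ changes nothing, by continuity.

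\paragraph{The paper's proof has an invalid step at the same point.} It derives the same Hessian, $-(\mathbf{ZD_1}+\mathbf{D_1Z}+(\alpha-2)\mathbf{ZD_2Z})$. It then rewrites $\mathbf{x}^\top(\mathbf{ZD_1}+\mathbf{D_1Z})\mathbf{x}=2\mathbf{x}^\top\mathbf{D_1Z}\mathbf{x}$ as $2(\mathbf{D_1}^{1/2}\mathbf{x})^\top\mathbf{Z}(\mathbf{D_1}^{1/2}\mathbf{x})$. That identity holds only when $\mathbf{D_1}$ is a multiple of the identity. In the example above, with $\alpha=3$, $\mathbf{x}=\mathbf{v}$ and $\eta\to0$, the former tends to $-1$ and the latter to $+1$. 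So the gap you flagged is a false step, not a missing trick, and neither of your fallbacks can close it. The substitution $\mathbf{v}=\mathbf{Z}^{-1}\mathbf{w}$ only rewrites the same indefinite form, and a Ricotta--Szeidl style argument cannot prove a false claim.

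Under $\mathbf{Z}\succ0$ alone, what survives is the $\alpha=2$ case you proved. Structure-blind cases such as $\mathbf{Z}=\mathbf{I}$ also survive, since there $g''(0)/\beta=(\beta+1)\sum_ip_i^{\beta-1}v_i^2$. For $\alpha>2$ the statement needs additional hypotheses on $\mathbf{Z}$. Without them, the Bregman divergence built from $-\mathcal{H}^{\mathbf{Z}}_\alpha$ can take negative values.
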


Note that our result is for the interior of the simplex (see Methods for proof). This is not particularly restrictive and we ensure it in practice by setting zero entries to a small positive constant. When the conditions are met and entropy is strictly concave, negative entropy is strictly convex and we define its Bregman divergence as structure-aware divergence.

\begin{definition}
    For $\alpha \geq 2$ and $\vec{Z}\succ0$, the \textbf{structure-aware divergence} of order $\alpha$ is the function $\textnormal{d}_\alpha^\vec{Z}: \Delta_n^\circ \times \Delta_n^\circ \to [0, \infty)$ defined as follows.
    \begin{equation}
    \begin{split}
         \divergence{\vec{Z}}{\alpha}{\vec{p}}{\vec{q}} &= \frac{1}{\alpha-1} \vec{p}^\top \left( \left( \vec{Zp} \right)^{\alpha - 1} - \left( \vec{Zq} \right)^{\alpha - 1} \right) \\
         &\qquad - \vec{q}^\top \textnormal{diag}\left( \left( \vec{Zq} \right)^{\alpha-2} \right) \vec{Z} \left( \vec{p} - \vec{q} \right),
    \end{split}
    \end{equation}
    where the powers are element-wise.
\end{definition}

Ignoring structure, the blue and orange distributions have the same divergence from the green one in Fig.~\ref{fig:schematic}. However, in the blue distribution, the difference in probability from the green one is similar for similar elements. Structure-aware divergence captures the intuition that this makes it more dissimilar than the orange distribution. The divergence induced by Rao's index is the squared Mahalanobis distance~\cite{Mahalanobis1936}: $\divergence{\vec{Z}}{2}{\vec{p}}{\vec{q}}=\left( \vec{p}-\vec{q} \right)^\top \vec{Z} \left( \vec{p} - \vec{q} \right)$. $\divergence{\vec{Z}}{\alpha}{\vec{p}}{\vec{q}}\geq0$, with equality if and only if $\vec{p}=\vec{q}$, it is convex in $\vec{p}$, and has the mean-minimiser property, which makes it particularly suited to clustering tasks (see Methods).

Note that formulating a clustering problem requires a way to measure how dispersed a set of distributions is. This is captured by Bregman information, which generalises variance and mutual information~\cite{Banerjee2004}.
\begin{definition}
    For $\alpha \geq 2$ and $\vec{Z}\succ0$, consider $m\geq1$ distributions $\left( \vec{p^1},\dots,\vec{p^m} \right)$ with each $\vec{p^a}\in\Delta_n^\circ$. Let $\vec{w}\in\Delta_m$ be their weights. The \textbf{Bregman information} of this collection of distributions is
    \begin{equation}
    \begin{split}
        \textnormal{I}^\vec{Z}_\alpha\left( \left(\vec{p^a}\right); \vec{w}\right) &= \sum_{a=1}^m w_a\,\divergence{\vec{Z}}{\alpha}{\vec{p^a}}{\bm{\mu}}\\
        &= \entropy{\vec{Z}}{\alpha}{\bm{\mu}} - \sum_{a=1}^mw_a\,\entropy{\vec{Z}}{\alpha}{\vec{p^a}},
    \end{split}
    \end{equation}
    where $\bm{\mu}=\sum_aw_a\vec{p^a}\in\Delta_n^\circ$ is the mean.
\label{def:bregman_information}
\end{definition}
$\textnormal{I}^\vec{Z}_\alpha$ is equivalently the expected divergence from the mean and the Jensen gap of negative entropy. This equivalence is implicit from the definition of the $\divergencename{\vec{Z}}{\alpha}$ (see Methods). For a partition, the total Bregman information can be written as a sum over between and within-cluster informations. The fraction that between-cluster accounts for measures how well the partition explains the data. Jensen-Bregman divergence is the Bregman information of two distributions and serves as a symmetric comparison.

\subsection{Positive definite similarity matrices}

How do we construct positive definite similarity matrices? The literature on metric spaces provides a remarkably powerful route. Let $d$ be a metric on $\mathcal{X}$ that creates a distance matrix $\vec{D}$. Defining $Z_{ij}=\exp \left( -\tau D_{ij} \right)$ for some scaling constant $\tau>0$ guarantees that $\vec{Z}$ satisfies the range and symmetry conditions of similarity matrices. Meckes~\cite[Theorem 3.3]{Meckes2013} showed that $\vec{Z} \succ 0$ for all $\tau>0$ if and only if $d$ is a metric of negative type (see Methods). Many metric spaces that arise naturally in real systems satisfy this. For instance,
\begin{itemize}[noitemsep,topsep=0pt]
    \item subsets of Euclidean space (embeddings),
    \item any metric space with at most four points~\cite{Meckes2013},
    \item weighted trees with the shortest path distance (hierarchies)~\cite{Hjorth1998}, and
    \item undirected graphs with the effective resistance distance (diffusion on graphs)~\cite{Fiedler2011, Devriendt2022Thesis}.
\end{itemize}

If a similarity matrix defined in another way fails to be positive definite, we can attempt to find the best alternative. Finding the nearest positive definite similarity matrix can be framed as a convex optimisation problem (see Methods).

\subsection{Synthetic experiments}
\begin{figure*}[ht]
    \centering
    \includegraphics{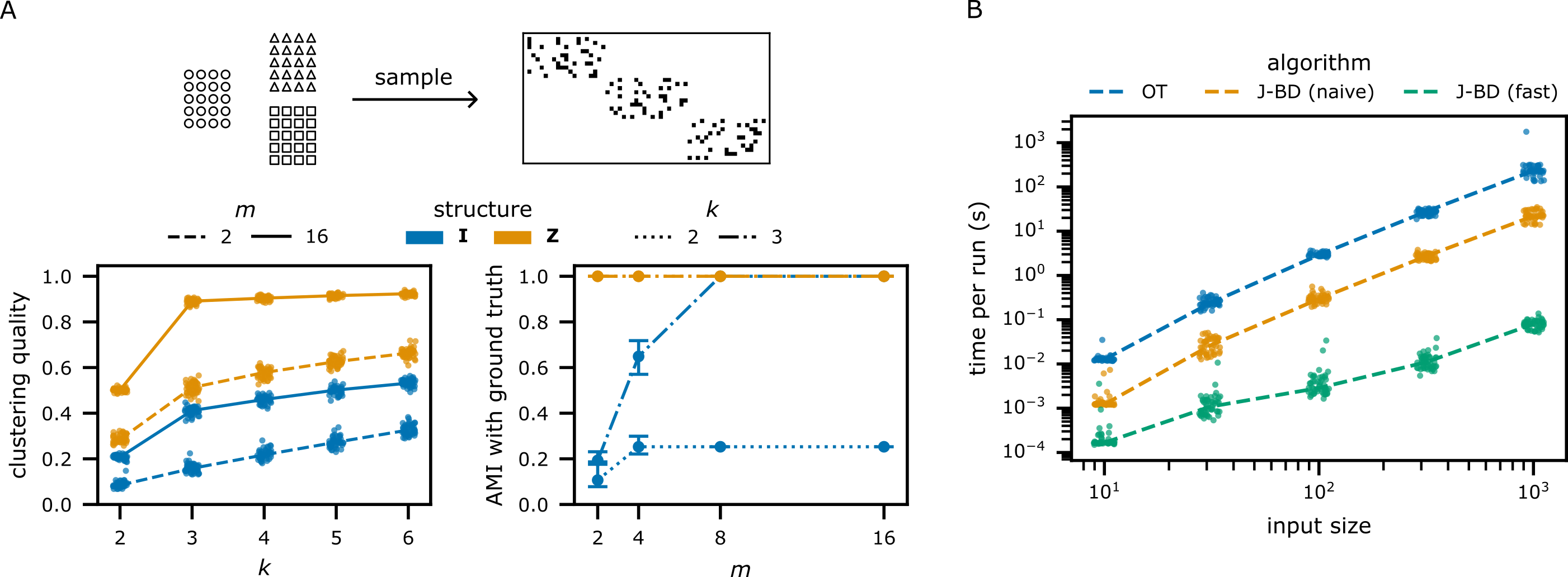}
    \caption{\textbf{Synthetic experiments.} \textbf{A. Planted patterns.} (Top) The embedding in $\mathbb{R}^2$, with ground truth groups represented as different shapes. We show a typical sample with $m=4$, with each row corresponding to a distribution. (Bottom left) The clustering quality as a function of $k$ for $\vec{I}$ (blue) and $\vec{Z}$ (orange). The points mark the results from 50 runs and the lines connected the median values for $m=2$ (dashed) and $m=16$ (solid). (Bottom right) The alignment with ground truth, measured using Adjusted Mutual Information (AMI), against $m$. We report the median and 95\% confidence intervals for $k=2$ (dotted) and $k=3$ (dot-dashed). \textbf{B. Runtime.} The runtime in seconds as a function of the number of input distributions for the all-pairs dissimilarity task. The points mark the results from 50 runs each for {\small OT} (blue), {\small J-BD (naive)} (orange), and {\small J-BD (fast)} (green). The lines connect the medians. The {\small J-BD} methods use $\alpha=2$. Note the log scale on both axes. }
    \label{fig:synthetic}
\end{figure*}
\subsubsection{Recovering planted partitions}
Our first synthetic experiment shows that a clustering pipeline built on $\divergencename{\vec{Z}}{\alpha}$ recovers partitions planted in the similarity structure. The experiment is designed for simplicity, allowing us to illustrate the differences between the conventional ($\vec{I}$) and structure-aware ($\vec{Z}$) approaches.

Consider a set with 60 elements organised in three groups of 20 each. We embed elements in $\mathbb{R}^2$, placing the groups cohesively (Fig.~\ref{fig:synthetic}A, top). The triangle and square groups are closer to each other than they are to the circle. We construct a distribution by picking a group and sampling $m$ elements from it with replacement. In a single run of the experiment, we generate ten distributions each from the three groups and cluster them (see Methods). We report results over 50 runs for each $m$.

Two distributions sampled from the same group may have no elements in common, making it impossible to distinguish between them and a pair from different groups without structural information. Distributions are less likely to overlap for low $m$. What effect does this have on the quality of partitions? The fraction of Bregman information explained by a partition measures its quality of fit~\cite{Banerjee2004}. The quality of the best partition always increases with the number of clusters $k$, with elbows suggesting optima (Fig.~\ref{fig:synthetic}A, bottom left). While there is no clear elbow for $\vec{I}$ when $m=2$, we see one for $\vec{Z}$, indicating that incorporating structure enables us to identify the planted optimum $k$ from under-sampled data.

Clustering quality only measures if the clusters explain the data. Do we recover the planted patterns? The ground truth for $k=3$ is distributions being clustered by the group they were sampled from. We expect the triangle and square group to merge for $k=2$. We recover the ground truth with $\vec{Z}$ for both $k$ perfectly for all $m$ (Fig.~\ref{fig:synthetic}A, bottom right). Without structure, we must rely on distributions overlapping to cluster them. For $k=3$, we find the planted partition for high enough $m$. Since the distributions from triangles and squares never overlap by construction, it is impossible to find the two-cluster ground truth with $\vec{I}$. This simple experiment demonstrates how structure-blind methods can completely fail to provide insights into systems where structure is important.

\subsubsection{Runtime compared to Optimal Transport}

OT provides a suite of tools to compare distributions over metric spaces~\cite{Villani2009}. While computing the OT dissimilarity of two distributions involves solving an optimisation problem, our approach has closed-form expressions. Consider the problem of measuring the pairwise dissimilarity of distributions in a given set. Sampling these uniformly at random from $\Delta_n$, we compare the runtime of the following approaches (see Methods for details).
\begin{itemize}[noitemsep,topsep=0pt]
    \item {\small OT}: Wasserstein-1 distance~\cite{Kantorovich1939,Vaserten1969} iterating over each pair.
    \item {\small J-BD (naive)}: Jensen-Bregman divergence iterating over each pair.
    \item {\small J-BD (fast)}: Jensen-Bregman divergence vectorised.
\end{itemize}
{\small J-BD (naive)} is roughly an order of magnitude faster than {\small OT} for all input sizes (Fig.~\ref{fig:synthetic}B for $\alpha=2$). {\small J-BD (fast)} is quicker still and scales better with input size. The results are qualitatively identical for $\alpha=3$ (see SI Appendix). Unless the OT formulation is physically meaningful to the system, our approach is a faster alternative to make structure-aware comparisons. Note that they are fundamentally different ways of comparing distributions and might have different results. In this experiment, {\small OT} and {\small J-BD} dissimilarity have a Pearson's $r \approx 0.92$ and Kendall's $\tau\approx 0.77$ for all input sizes. These correlations may be sensitive to parameter choices.

\subsection{The geography of occupations in England and Wales}
A region's occupational composition serves as a proxy for its capability base, which shapes key features such as resilience to shocks and the ability to diversify into new economic activities~\cite{Neffke2011}. Daniotti et al.~\cite{Daniotti2025} found large-scale patterns in how the `coherence' of cities varies with time and population size. They define coherence as the expected similarity of the occupations of two randomly sampled workers. In our framework, coherence is exactly $1-\entropyname{\vec{Z}}{2}$. This motivates using structure-aware divergence to compare the occupational compositions, and therefore the capability bases, of regions.

We analyse occupation data from the 2021 census of England and Wales at the geographic scale of Local Authorities (LAs). We proxy each LA's capability base as its employment shares in occupations that produce tradable output, excluding occupations such as health service and retail workers that cater mainly to the local population~\cite{Daniotti2025}. We identify 41 such occupations and represent each of the 318 LAs as a distribution over them (see Methods).

There are several ways to conceptualise the relatedness of occupations. We construct similarity matrices for two commonly-used notions: occupations are related if (i) they require the same skills -- skills similarity or (ii) the same regions specialise in them -- co-location similarity (see Methods). Co-located occupations tend to belong to similar sectors, but may require very different skills. For instance, the occupations most similar to \textit{Agricultural and Related Trades} by co-location are \textit{Animal Care and Control Services}, \textit{Veterinary nurses}, and \textit{Managers and Proprietors in Agriculture Related Services}. However, the latter is ranked 20th out of 40 by skills similarity. The most skills similar occupation is \textit{Elementary Storage Occupations}, which is ranked 26th by co-location.

How does incorporating relatedness alter the comparison of LAs? The entropy of an LA's occupational composition measures the diversity of its capability base. Setting $\alpha=2$ to align with Daniotti et al.~\cite{Daniotti2025}, Kendall's $\tau$ between diversity rankings is 0.63 (no similarity vs skills), 0.68 (no similarity vs co-location), and 0.54 (skills vs co-location). The divergence of an LA's occupational composition from the population's indicates how representative it is: atypical LAs have high values. The corresponding Kendall's $\tau$ values are 0.79, 0.80, and 0.90 respectively (see SI Appendix for other values of $\alpha$).

\subsubsection{Regionalisation}
\begin{figure}[h]
    \centering
    \includegraphics{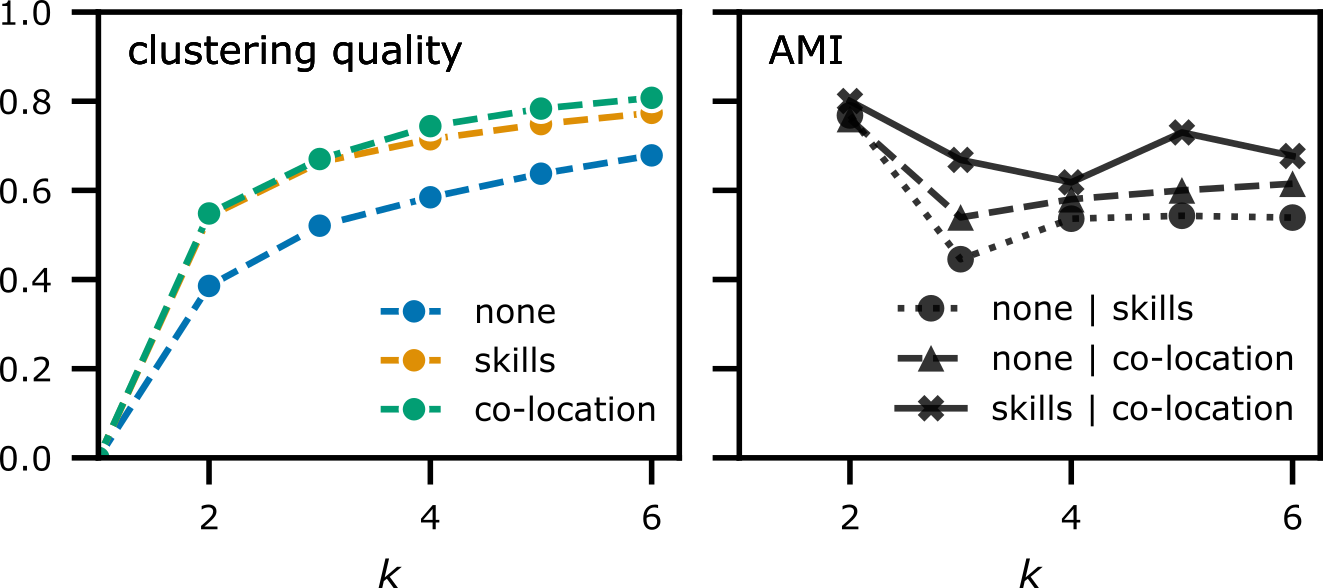}
    \caption{\textbf{Statistics for regionalisation by occupational composition.} (Left) Clustering quality as a function of the number of clusters $k$ for occupation similarity matrices modelling no similarity (blue), skills similarity (orange), and co-location similarity (green). The clustering quality of a partition is the fraction of total Bregman information explained by it. (Right) The AMI between the optimal partitions for each $k$ for each pair of similarity matrices.}
    \label{fig:ew_clustering_stats}
\end{figure}

\begin{figure*}[ht]
    \centering
    \includegraphics{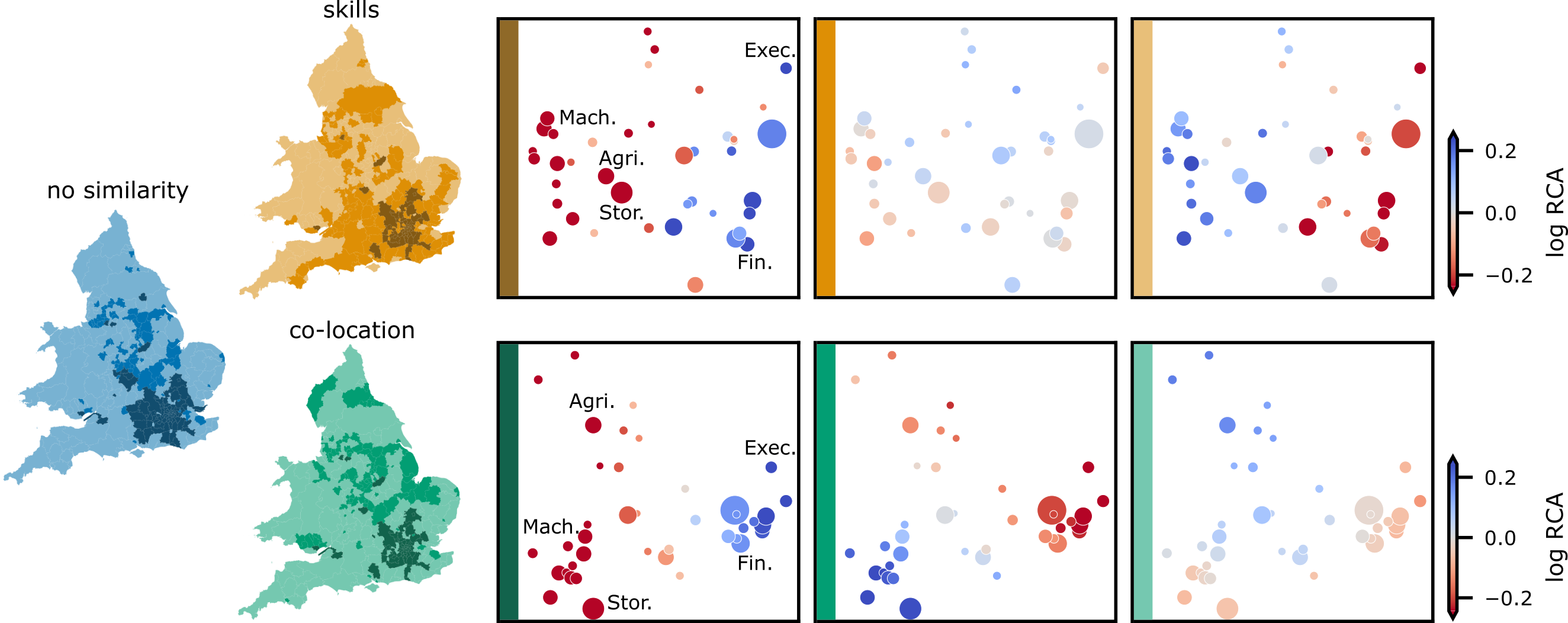}
    \caption{\textbf{Three-way partitions by occupational composition.}
    (Left) The geography of the optimal partitions into three regions using no occupation similarity (blue), skills similarity (orange), and co-location similarity (green). We distinguish the regions by intensity of colour (light, medium, and dark). 
    (Right) The occupational composition of each cluster for skills (top row) and co-location (bottom row) similarity. Each point is an occupation, with size corresponding to employment share in the total population. We label a few distinctive occupations (see end of caption for abbreviations). Their positions are set by Multi-Dimensional Scaling (MDS) of the similarity structure -- similar occupations are closer. Each plot corresponds to the region indicated by the stripe on its left margin. The colour of an occupation represents the Revealed Comparative Advantage (RCA) of the corresponding region in it. Blue and red indicate over and under-representation respectively. Abbreviations: Exec. is \textit{Chief Executives and Senior Officials}, Fin. is \textit{Finance Professionals}, Mach. is \textit{Plant and Machine Operatives}, Stor. is \textit{Elementary Storage Occupations}, and Agri. is \textit{Agricultural and Related Trades}.
    }
    \label{fig:ew_maps}
\end{figure*}

Grouping LAs into regions with similar capability bases reveals broad patterns in specialisation. The two-way partitions are similar for all choices of $\vec{Z}$ and explain a large fraction of the Bregman information (Fig.~\ref{fig:ew_clustering_stats}). This division captures the concentration of white collar employment in the south east surrounding London and in small pockets around cities such as Manchester, Bristol, and Cardiff (see SI Appendix). For larger $k$, the optimal partitions can be quite different, as we illustrate for $k=3$ in Fig.~\ref{fig:ew_maps}.

All three partitions reveal a services-heavy region primarily in the south east (dark shade, Fig.~\ref{fig:ew_maps}). This region is similar in the structure-aware partitions (orange for skills and green for co-location), and contains around 25\% of the population. It is substantially larger in the structure-blind partition (blue), with more than 33\% of the population. The remaining regions differ markedly depending on the definition of similarity.

Using co-location similarity reveals sector-specialised regions. The industry-heavy region (medium green) includes population centres such as Birmingham, Northamptonshire, and Bradford and houses 30\% of the population. The vast agricultural region (light green) spans nearly half the LAs and 43\% of the population, encompassing nearly all of Yorkshire, Cornwall, and Wales.

Under skill similarity, some industrial and agricultural LAs merge into one region (light orange) with 38\% of the population. An interpretation of skills-based regions is that two LAs are in the same region if the workforce of one can perform the occupations of the other. The overlap in the skills of blue collar occupations across sectors means that LAs with large population shares in them are grouped together. The remaining skill-based region has no clear specialisation (medium orange). Despite having 37\% of the population, it contributes only 0.8\% to the 66\% of information explained by the partition, indicating that on average, it is very typical of the overall population. 

These differences show that clustering based on structure-aware divergence is sensitive to the similarity structure in intuitive ways. In economic geography, this enables regionalisation under different notions of relatedness. The usefulness of the resulting regions depends on how well the similarity structure aligns with the research question.

\subsection{$\beta$-diversity of vegetation in the Rutor glacier}
The variability of ecosystem composition across sampling sites, termed $\beta$-diversity, is a vital quantity in ecology~\cite{Whittaker1960,Anderson2011}. A key step in defining $\beta$-diversity is deciding how to measure the difference between sites. A growing body of research advocates for incorporating species similarities, defined using traits that proxy their functional role in the ecosystem~\cite{McGill2006,Violle2007}. Researchers argue that functional $\beta$-diversity is a better signal of ecological processes.

Ricotta et al.~\cite{Ricotta2021} proposed a functional $\beta$-diversity measure built on Gregorius et al.'s OT approach to comparing trait distributions~\cite{Gregorius2003}. Broadly, this method matches species between two sites in order to minimise the mean dissimilarity between pairs of matched individuals. This comes at a computational cost, which is remarked upon by Ricotta et al. We show that our framework reproduces the insights of Ricotta et al.~\cite{Ricotta2021} and Ricotta and Pavoine~\cite{Ricotta2024} on functional $\beta$-diversity in the distribution of vegetation in the Rutor glacier.

The data consists of the abundances and functional traits of 45 plant species in 59 plots along the glacier~\cite{Caccianiga2006}. We use the traits to define a similarity matrix $\vec{Z}$ between the species (see Methods). The plots are along a primary succession, where the retreating glacier exposed previously uninhabited soil. They come from three different successional stages defined using the age of the glacial deposits. There are 17 early, 32 mid, and 10 late stage plots. Previous studies~\cite{Ricotta2021, Ricotta2024} found that while species turnover amongst plots is comparable for the three stages, the mid and late stages have much lower functional $\beta$-diversity than the early stage. This indicates that while species vary amongst the plots in the mid and late stages, their functional composition remains fairly similar. Plots from the early stage are more functionally different.

\begin{figure}[h]
    \centering
    \includegraphics{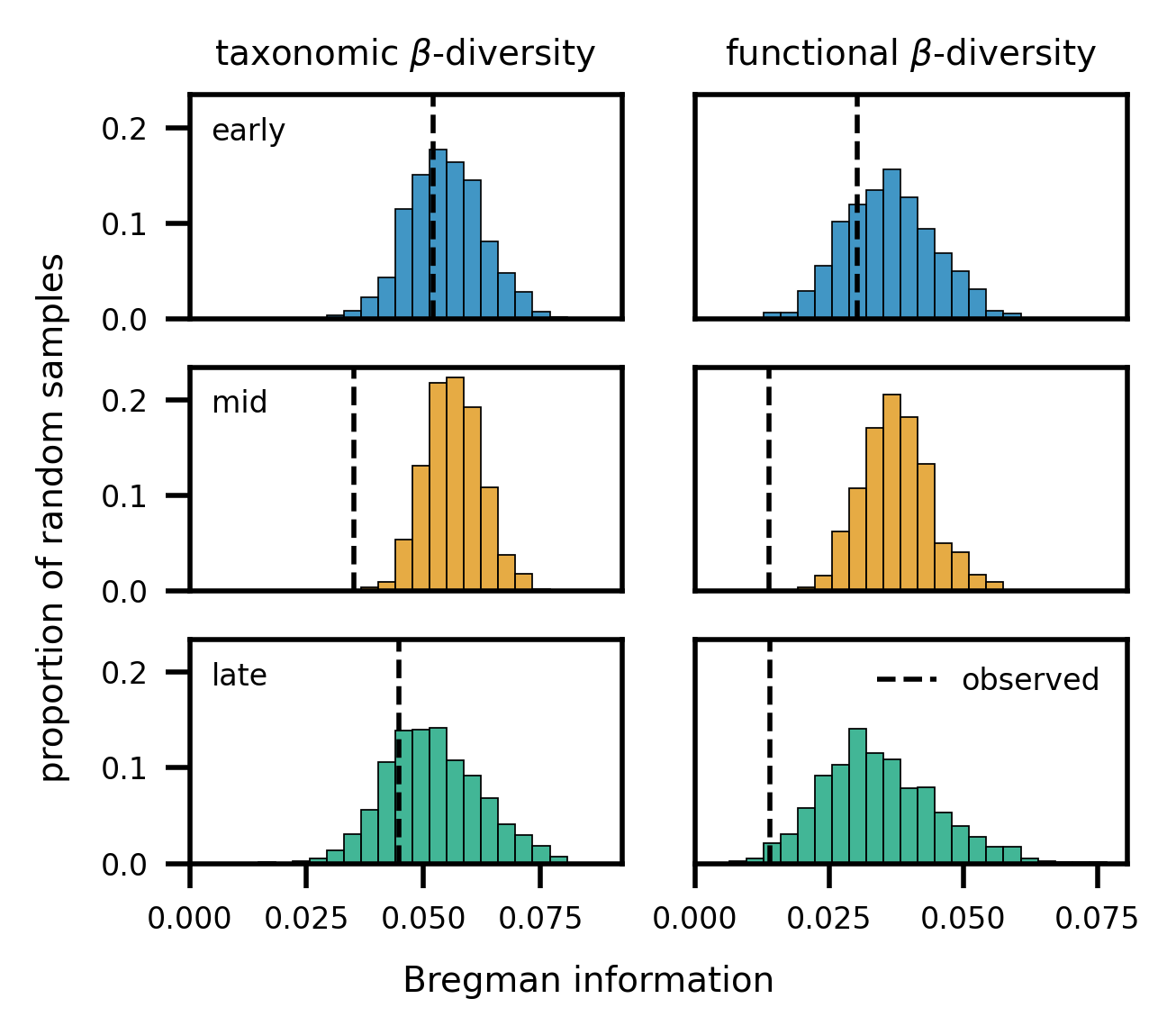}
    \caption{\textbf{$\beta$-diversity of the successional stages of the Rutor glacier.} We plot the taxonomic (left) and functional (right) $\beta$-diversity of the three successional stages (rows). We measure the $\beta$-diversity of a stage as its Bregman information. The distributions are the $\beta$-diversities of 1,000 synthetic stages with randomly sampled plots. The dashed black lines mark the empirical values. These results are for $\alpha=2$.}
    \label{fig:rutor_beta_diversity}
\end{figure}

We measure the $\beta$-diversity of a stage as its Bregman information assuming a uniform distribution over the plots. Using $\vec{I}$ captures species turnover or taxonomic $\beta$-diversity and using $\vec{Z}$ the functional $\beta$-diversity. We set $\alpha=2$ for this analysis. For each stage, we compare the empirical $\beta$-diversity to a null model where we generate 1,000 synthetic stages with the same number of plots sampled uniformly at random. If a stage were comprised of similar plots, we would expect its $\beta$-diversity to be lower than the null model.

The taxonomic $\beta$-diversities of the mid and late stages are 67\% and 86\% respectively of the early stage's (Fig.~\ref{fig:rutor_beta_diversity}). Their functional $\beta$-diversities are both 46\% of the early stage's. This supports the theory that random dispersal mechanisms drove early colonisation while biotic interactions shaped the later stages~\cite{Ricotta2021, Ricotta2024}. The early stage plots are not significantly taxonomically or functionally similar than at random. Of the 1,000 randomly sampled subsets of 17 plots, 39\% (taxonomic) and 24\% (functional) have a lower $\beta$-diversity than the early stage. The mid stage is more cohesive taxonomically and functionally than at random. Interestingly, while 24\% of the random subsets have a lower taxonomic $\beta$-diversity than the late stage, only 1\% are less functionally diverse. These results are qualitatively identical for $\alpha=3$ and $4$, indicating that they are not driven by the rare species (see SI Appendix).

\section{Discussion}

Similarity relations within a distribution’s support frequently play a central role in modelling and inference. The conventional assumption that outcomes are either the same or wholly distinct, though convenient, eliminates potentially informative structure. Standard information theory measures including Shannon entropy~\cite{Shannon1948} and KL divergence~\cite{Kullback1951} do not account for similarity. While structure-aware entropies have been explored in multiple fields, we advance the literature by introducing a structure-aware divergence for comparing distributions.

Our work builds on the diversity measures defined by Ricotta and Szeidl~\cite{Ricotta2006}. We show that they are strictly concave under certain conditions and provide a wide range of useful examples that satisfy them. This allows us to define structure-aware divergences~\cite{Bregman1967} using the geometric machinery of Bregman divergences that connect KL divergence and Shannon entropy. The corresponding Bregman information is the analogue of mutual information~\cite{Banerjee2004}.

With a synthetic clustering experiment, we show that (i) some patterns are irretrievably lost by discarding structure, and (ii) modelling structural information helps us recover planted patterns with less sampling than structure-blind methods. In a separate experiment, we compare our tools to the OT approach~\cite{Villani2009} to comparing distributions over structured spaces. Our closed form expression makes computing structure-aware divergence at least an order of magnitude faster, with vectorising yielding more gains.

We demonstrate the applicability of our methods to real systems with examples from economic geography and ecology, where both theory and empirics have established the importance of structure~\cite{Hidalgo2018, McGill2006}. Motivated by recent research studying occupational composition with structure-aware entropy~\cite{Daniotti2025}, we study the geography of occupations in England and Wales. We find that structure changes the rankings of locations by occupation diversity and representativeness. Structure-aware regionalisation reveals clusters that depend on the choice of similarity measure in intuitive ways. The importance of relatedness in explaining processes such as diversification and innovation~\cite{Neffke2011,Frenken2007, Hidalgo2018} suggests that our methods are broadly applicable.

Data-driven regionalisation also provides insights into segregation~\cite{Lee2008}, allowing researchers to combat the modifiable areal unit problem. In previous work~\cite{Sahasrabuddhe2025}, we measured ethnic segregation in England and Wales at different administrative scales. Following information theory-based segregation research~\cite{Reardon2002, Mora2011}, we moved between demographic scales by merging ethnic groups. The methods we propose here allow us to move past this binary choice and are compatible with recent regionalisation methods~\cite{Chodrow2017}. Evidence that perceptions of diversity are subjective~\cite{Unzueta2010, Bauman2014} suggests that this flexibility is needed.

In an ecology case study, we compared Bregman information to an OT formulation of functional $\beta$-diversity~\cite{Gregorius2003, Ricotta2021}. Our results are qualitatively identical to previous studies~\cite{Ricotta2021, Ricotta2024} of vegetation in the Rutor glacier. To our knowledge, the numerical value of the OT-based $\beta$-diversity is not biologically interpretable: its purpose is to enable comparison. This suggests that our methods reduce computational costs without sacrificing interpretability. Scalability is likely to become increasingly important as large datasets of species abundances and traits become more common~\cite{Jones2009, Tobias2022}.

Several avenues for methodological research emerge from our work. On the mathematical side, our conditions for the strict concavity of $\entropyname{\vec{Z}}{\alpha}$ are sufficient but not necessary; relaxing them would extend the applicability of our methods. Schoenberg~\cite{Schoenberg1938} constructed positive semidefinite similarity kernels from metrics with a wider class of `completely monotone' functions, which may be useful for our methods. From a geometric perspective, Bregman divergences admit well-behaved projections onto subsets of the probability simplex~\cite{Dhillon2008}, opening up applications that we have not yet explored. Computational challenges also arise for large $n$: storing and manipulating a dense similarity matrix may be infeasible, and developing sparse or approximate representations would improve scalability. Extending our clustering algorithm to allow soft clusters~\cite{Banerjee2004} would make it more versatile. In conclusion, our structure-aware divergences provide a broadly applicable framework for analysing distributions, while opening avenues for further methodological development.

\section{Methods}

\subsection{Structure-aware entropy}
Our definition of $\entropyname{\vec{Z}}{\alpha}$ is due to Leinster and Cobbold~\cite{Leinster2012}, who proved that it is equivalent to Ricotta and Szeidl's~\cite{Ricotta2006} original formulation. For $\vec{Z}=\vec{I}$, we recover a family of entropies parametrised by $\alpha$, originally proposed by Havrda and Charv\'at~\cite{Havrda1967}, and known in statistical physics after Tsallis~\cite{Tsallis1988} and in ecology after Patil and Taillie~\cite{Patil1982}.
How does $\entropyname{\vec{Z}}{\alpha}$ depend on its parameters?
\begin{proposition}
    Increasing similarity decreases entropy. $\vec{Z_1}\geq\vec{Z_2} \implies \entropyname{\vec{Z_1}}{\alpha} \leq \entropyname{\vec{Z_2}}{\alpha}.$
\end{proposition}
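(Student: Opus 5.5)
The plan is to show that the entropy is the $\vec{p}$-weighted average of a surprise function that is decreasing in ordinariness. Entrywise comparison of the similarity matrices then gives entrywise comparison of ordinariness, and hence of entropy.

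Fix $\alpha \in [0,\infty)$ and $\vec{p} \in \Delta_n$. By Definition~\ref{def:structure_aware_entropy},
\begin{equation}
    \entropy{\vec{Z}}{\alpha}{\vec{p}} = \sum_{i} p_i\, s_\alpha\!\left( (\vec{Zp})_i \right),
\end{equation}
where $s_\alpha$ is the surprise function in Eq.~\eqref{eq:surprise}. Write $\alpha \neq 1$ and $\alpha = 1$ for the two branches of that function.

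The first step is monotonicity of ordinariness. Since $\vec{p} \geq 0$ and $\vec{Z_1} \geq \vec{Z_2}$ entrywise, every $i$ satisfies
\begin{equation}
    (\vec{Z_1 p})_i = \sum_j (Z_1)_{ij} p_j \;\geq\; \sum_j (Z_2)_{ij} p_j = (\vec{Z_2 p})_i .
\end{equation}
Both quantities lie in $[0,1]$. Moreover, if $p_i > 0$ then $(\vec{Z_2 p})_i \geq Z_{ii} p_i = p_i > 0$, because $Z_{ii} = 1$ for a similarity matrix.

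The second step is that $s_\alpha$ is nonincreasing on $(0,1]$. For $\alpha = 1$, $-\ln x$ is decreasing. For $\alpha \neq 1$, the derivative of $\frac{1}{\alpha-1}(1 - x^{\alpha-1})$ is $-x^{\alpha-2} \leq 0$. This holds for $\alpha > 1$ and for $\alpha < 1$ alike, since the sign of $\alpha - 1$ cancels. At $\alpha = 0$ the function is $1/x - 1$, which is also decreasing.

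Combining the two steps, each term with $p_i > 0$ satisfies
\begin{equation}
    p_i\, s_\alpha\!\left((\vec{Z_1 p})_i\right) \leq p_i\, s_\alpha\!\left((\vec{Z_2 p})_i\right).
\end{equation}
Terms with $p_i = 0$ contribute zero to both sides, using the convention $0 \ln 0 := 0$ (and its analogue $0 \cdot (\cdot)$ for $\alpha < 1$). Summing over $i$ gives $\entropy{\vec{Z_1}}{\alpha}{\vec{p}} \leq \entropy{\vec{Z_2}}{\alpha}{\vec{p}}$ for every $\vec{p}$, which is the claim.

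The only delicate point is the boundary of the simplex. When $p_i = 0$, the ordinariness $(\vec{Zp})_i$ may vanish, and for $\alpha \leq 1$ the surprise is then infinite or undefined. The observation that $(\vec{Zp})_i \geq p_i$ ensures that every term carrying positive weight has strictly positive ordinariness. The zero-weight terms are discarded by the stated convention, so no real obstacle arises.
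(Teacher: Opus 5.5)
Your proof is correct, but it takes a different route from the paper's. The paper gives no direct argument. It cites Leinster and Cobbold's monotonicity result for the structure-aware Hill numbers ${}^{\alpha}D^{\vec{Z}}(\vec{p})=\bigl(\sum_{i\in\mathrm{supp}\,\vec{p}} p_i(\vec{Zp})_i^{\alpha-1}\bigr)^{1/(1-\alpha)}$. It then notes that, for fixed $\alpha$, the entropy is a strictly increasing invertible function of $D$: namely $\mathcal{H}^{\vec{Z}}_\alpha=\frac{1}{\alpha-1}\bigl(1-D^{1-\alpha}\bigr)$, or $\ln D$ when $\alpha=1$. So the ordering in $\vec{Z}$ carries over from $D$ to $\mathcal{H}$.

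You instead prove the statement from scratch using the expected-surprise form $\sum_i p_i\, s_\alpha\bigl((\vec{Zp})_i\bigr)$. You combine two facts: ordinariness is entrywise monotone in $\vec{Z}$, and $s_\alpha'(x)=-x^{\alpha-2}<0$. The underlying mechanism is the same one behind Leinster and Cobbold's result, which rests on ordinariness increasing with $\vec{Z}$ and the relevant power mean being monotone in its arguments.

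Your version has several advantages:
\begin{itemize}
\item It is self-contained.
\item It treats every $\alpha\in[0,\infty)$ in one computation, without the intermediate change of variables.
\item It makes the boundary issue explicit through $(\vec{Zp})_i\ge p_i$.
\item It gives strict inequality whenever $(\vec{Z_1p})_i>(\vec{Z_2p})_i$ for some $i$ in the support.
\end{itemize}
The paper's route is shorter on the page and ties the result to the Hill-number literature. In exchange, it leaves implicit both the check that the transformation is increasing and the handling of zero-probability terms, which it delegates to the cited work.
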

Leinster and Cobbold proved this property for structure-aware Hill numbers~\cite[Appendix Proposition A17]{Leinster2012}. Our proposition follows since given $\alpha$, $\entropyname{\vec{Z}}{\alpha}$ is an invertible and increasing transformation of the structure-aware Hill numbers. $\entropyname{\vec{Z}}{\alpha}$ is highest for $\vec{Z}=\vec{I}$.

\begin{proposition}
    Increasing $\alpha$ decreases entropy. $\alpha_1\geq\alpha_2 \implies \entropyname{\vec{Z}}{\alpha_1} \leq \entropyname{\vec{Z}}{\alpha_2}$.
\end{proposition}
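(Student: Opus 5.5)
The entropy $\entropy{\vec{Z}}{\alpha}{\vec{p}}$ is an expectation of a surprise function applied pointwise, so the plan is to reduce the statement to a pointwise claim about that function. Write
\begin{equation}
\entropy{\vec{Z}}{\alpha}{\vec{p}} = \sum_i p_i\, s_\alpha\!\left((\vec{Zp})_i\right),
\end{equation}
where $s_\alpha(x) = \frac{1}{\alpha-1}(1-x^{\alpha-1})$ for $\alpha\neq1$ and $s_1(x)=-\ln x$, as in Eq.~\eqref{eq:surprise}. The ordinariness $x=(\vec{Zp})_i$ does not depend on $\alpha$. Since $p_i\ge 0$, it suffices to show that for each fixed $x\in(0,1]$ the map $\alpha\mapsto s_\alpha(x)$ is non-increasing on $[0,\infty)$.

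Terms with $p_i=0$ contribute nothing, by the convention $0\ln 0:=0$ (and $0\cdot$ anything finite is $0$). If $p_i>0$, then $(\vec{Zp})_i\ge Z_{ii}p_i=p_i>0$, because every entry of $\vec{Z}$ is non-negative and $Z_{ii}=1$. Also $(\vec{Zp})_i\le \sum_j p_j=1$. So only $x\in(0,1]$ matters, and no ill-defined $0^{\alpha-1}$ or $\ln 0$ terms arise.

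For the pointwise claim, I use the integral representation
\begin{equation}
s_\alpha(x)=\int_x^1 t^{\alpha-2}\,dt ,
\end{equation}
valid for every real $\alpha$ and $x\in(0,1]$. For $\alpha\neq1$ it integrates to $\frac{1-x^{\alpha-1}}{\alpha-1}$, and for $\alpha=1$ it gives $-\ln x$. This covers the $\alpha=1$ case without a separate limiting argument.

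For $t\in[x,1]\subseteq(0,1]$, the integrand $t^{\alpha-2}=e^{(\alpha-2)\ln t}$ is non-increasing in $\alpha$, since $\ln t\le 0$. Integrating over $[x,1]$ preserves this, so $\alpha_1\ge\alpha_2$ implies $s_{\alpha_1}(x)\le s_{\alpha_2}(x)$. Multiplying by $p_i\ge0$ and summing over $i$ then gives $\entropy{\vec{Z}}{\alpha_1}{\vec{p}}\le\entropy{\vec{Z}}{\alpha_2}{\vec{p}}$.

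The main obstacle is establishing monotonicity of $s_\alpha$ uniformly across the singular point $\alpha=1$ and near $x\to0$. Differentiating $\frac{1-x^{\alpha-1}}{\alpha-1}$ in $\alpha$ directly is messy, and the integral representation is chosen to sidestep this. A secondary check is that $(\vec{Zp})_i\le 1$; this uses $Z_{ij}\le1$ and is what makes $\ln t\le0$ on the whole range of integration.
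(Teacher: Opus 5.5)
Your proof is correct. It shares the paper's reduction: entropy is the $\vec{p}$-weighted expected surprise, so it suffices that $\alpha\mapsto s_\alpha(x)$ is non-increasing for each ordinariness $x$. You prove that pointwise fact by a different route.

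The paper (SI Appendix) differentiates directly. For $\alpha\neq1$ it computes $\frac{d}{d\alpha}\sigma_\alpha(x)=\frac{1}{(\alpha-1)^2}\bigl(x^{\alpha-1}(1-\ln x^{\alpha-1})-1\bigr)$ and asserts negativity because $y(1-\ln y)<1$ for $y=x^{\alpha-1}\neq1$. That inequality is $\ln z<z-1$ in disguise and is stated there without proof. The paper also restricts to $x\in(0,1)$ and handles $\alpha=1$ by continuity.

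Your representation $s_\alpha(x)=\int_x^1 t^{\alpha-2}\,dt$ reduces monotonicity to the elementary fact that $t^{\alpha-2}$ is non-increasing in $\alpha$ for $t\le 1$. It covers $\alpha=1$ and the endpoint $x=1$ uniformly, with no limiting argument or auxiliary inequality. You also check explicitly that $(\vec{Zp})_i\in(0,1]$ on the support of $\vec{p}$, which the paper leaves implicit.

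The paper reuses its explicit derivative for its companion claim that rarer elements are affected more, but your representation delivers that too. You get $\partial_\alpha s_\alpha(x)=\int_x^1 t^{\alpha-2}\ln t\,dt$, which is strictly negative for $x<1$, so strict monotonicity comes for free. Differentiating in $x$ then gives $\partial_x\partial_\alpha s_\alpha(x)=-x^{\alpha-2}\ln x>0$ on $(0,1)$.

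One small loose end concerns $\alpha<1$. There, a term with $p_i=0$ and $(\vec{Zp})_i=0$ is $0\cdot\infty$, not $0$ times something finite. You therefore need the convention that the sum runs over the support of $\vec{p}$. This is the convention Leinster and Cobbold use, and it is implicit in the paper.
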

In the SI Appendix, we show that increasing $\alpha$ reduces the surprise (Eq.~\ref{eq:surprise}) of each element. Rarer elements are affected more. Thus, choosing $\alpha$ can be viewed as picking the sensitivity to rare elements.

\subsection{Strict concavity of $\entropyname{\vec{Z}}{\alpha}$}

Given $\vec{Z}\succ0$ and $\alpha\geq2$, we will show that $\entropyname{\vec{Z}}{\alpha}$ is strictly concave in $\Delta_n^\circ$ (Thm.~\ref{thm:concavity}). We use the second derivative test~\cite[Section 3.1.4]{Boyd2004}.
\begin{lemma}
     Let $\phi:\Omega \to \mathbb{R}$ be a twice-differentiable function on an open convex domain $\Omega$.
    \begin{equation*}
        \nabla^2\phi(\omega) \prec 0 \; \forall \, \omega \in \Omega \implies \phi \textnormal{ is strictly concave}
    \end{equation*}
\end{lemma}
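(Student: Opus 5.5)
The plan is to reduce to one dimension by restricting $\phi$ to line segments, and then use the elementary fact that a function of one variable with negative second derivative is strictly concave. Fix two distinct points $\omega_1, \omega_2 \in \Omega$ and $\theta \in (0,1)$. We must show
\begin{equation*}
\phi\left(\theta\omega_1 + (1-\theta)\omega_2\right) > \theta\,\phi(\omega_1) + (1-\theta)\,\phi(\omega_2).
\end{equation*}
Set $\vec{v} = \omega_1 - \omega_2 \neq 0$ and define $g(t) = \phi(\omega_2 + t\vec{v})$. Convexity of $\Omega$ puts the closed segment $t \in [0,1]$ inside $\Omega$. Openness of $\Omega$ lets us extend $g$ to an open interval containing $[0,1]$, so derivatives at the endpoints cause no trouble. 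The claim becomes $g(\theta) > \theta g(1) + (1-\theta) g(0)$.

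By the chain rule, $g$ is twice differentiable with $g''(t) = \vec{v}^\top \nabla^2\phi(\omega_2 + t\vec{v})\,\vec{v}$. This is strictly negative because the Hessian is negative definite and $\vec{v}\neq 0$. Hence $g'$ is strictly decreasing on $[0,1]$.

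Next, write $c = \theta$. Apply the mean value theorem on $[0,c]$ and on $[c,1]$ to obtain $s \in (0,c)$ and $u \in (c,1)$ with
\begin{equation*}
g(c) - g(0) = c\, g'(s), \qquad g(1) - g(c) = (1-c)\, g'(u).
\end{equation*}
Since $s < u$ and $g'$ is strictly decreasing, $g'(s) > g'(u)$. Multiply the first identity by $(1-c)$ and the second by $c$ and compare. This gives
\begin{equation*}
(1-c)\left(g(c)-g(0)\right) > c\left(g(1)-g(c)\right),
\end{equation*}
which rearranges to exactly the desired strict inequality. This chain of identities and the strict comparison is the only step needing care. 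I expect it to be the main obstacle, though a mild one: strictness comes entirely from $g'(s) > g'(u)$, which in turn needs $g''<0$ on the whole open segment and $\vec{v}\neq 0$.

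Finally, a remark on how the lemma will be applied. The set $\Delta_n^\circ$ is open only relative to the affine hyperplane $\vec{1}^\top\vec{p}=1$. The same argument goes through verbatim there, provided we only require $\vec{v}^\top\nabla^2\phi\,\vec{v} < 0$ for directions $\vec{v}$ with $\vec{1}^\top\vec{v}=0$. This is because every $\vec{v} = \omega_1-\omega_2$ arising above satisfies this constraint, and relative openness still permits extending $g$ past the endpoints. In the proof of Thm.~\ref{thm:concavity}, it therefore suffices to check negativity of the Hessian quadratic form of $\entropyname{\vec{Z}}{\alpha}$ on nonzero vectors in the tangent space of the simplex, or more simply on all nonzero vectors.
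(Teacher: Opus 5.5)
Your proof is correct. It is the standard reduction-to-a-line argument behind the second-order condition in Boyd and Vandenberghe (Sec.~3.1.4); the paper gives no proof of its own here and simply cites that reference. Your closing remark also improves on the paper. The paper calls $\Delta_n^\circ$ ``open and convex'', but it is only relatively open in the hyperplane $\vec{1}^\top\vec{p}=1$. Your observation that the argument needs the Hessian form negative only on directions with $\vec{1}^\top\vec{v}=0$ is what makes the lemma apply there, and it matches the paper's own ``more general conditions'' remark.
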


Let us first establish that $\entropyname{\vec{Z}}{\alpha}$ satisfies the conditions of the lemma. $\Delta_n^\circ$ is indeed open and convex. For $\alpha \geq2$, $\entropyname{\vec{Z}}{\alpha}$ is a generalised multivariate polynomial in the entries of $\vec{p}$, ensuring that it is twice-differentiable for $\vec{p}\in\Delta_n^\circ$.

We define diagonal matrices 
\begin{align*}
    \vec{D_1} &= \textnormal{diag}\left(\left(\vec{Zp}\right)^{\alpha-2}\right) \\
    \vec{D_2} &= \textnormal{diag}\left(\vec{p} \otimes \left(\vec{Zp}\right)^{\alpha-3}\right),
\end{align*}
where the powers are element-wise, $\otimes$ is element-wise multiplication, and we drop the explicit dependence on $\vec{p}$. The gradient and the Hessian are as follows.
\begin{equation*}
    \nabla \entropy{\vec{Z}}{\alpha}{\vec{p}} = - \frac{1}{\alpha - 1} \left(\vec{Zp}\right)^{\alpha-1} - \vec{Z} \vec{D_1} \vec{p}
\end{equation*}
\begin{equation*}
    \nabla^2 \entropy{\vec{Z}}{\alpha}{\vec{p}} = - \left( \vec{ZD_1} + \vec{D_1 Z} + (\alpha-2) \vec{Z D_2 Z} \right)
\end{equation*}

From the definition of negative definiteness, $\nabla^2 \entropy{\vec{Z}}{\alpha}{\vec{p}} \prec 0$ if $\vec{x}^\top \nabla^2 \entropy{\vec{Z}}{\alpha}{\vec{p}} \vec{x} < 0$ for $\vec{x} \in \mathbb{R}^n \setminus \{ \vec{0}\}$.

\begin{align*}
    \vec{x}^\top \nabla^2 \entropy{\vec{Z}}{\alpha}{\vec{p}} \vec{x} &=  - \vec{x}^\top \left( \vec{ZD_1} + \vec{D_1 Z} + (\alpha-2) \vec{Z D_2 Z} \right) \vec{x} \\
    &= - 2 \left( \vec{D_1}^{1/2}\vec{x} \right)^\top \vec{Z} \left( \vec{D_1}^{1/2}\vec{x} \right) \\
    & \qquad - (\alpha-2) \left(\vec{D_2^{1/2} Z x} \right)^\top \left(\vec{D_2^{1/2} Z x} \right)
\end{align*}

\textbf{First term:} $(\vec{Zp})_i >0$ since $\vec{Z}$ is a similarity matrix and $\vec{p}\in\Delta^\circ_n$. Therefore, $\vec{D_1}^{1/2}$ is a diagonal matrix with positive diagonal and is invertible, i.e. $\vec{D_1}^{1/2}\vec{x} \neq \vec{0}$ for $\vec{x}\neq\vec{0}$. The first term is negative since $\vec{Z}\succ0$.

\textbf{Second term:} The inner product is a Euclidean norm and we have $\alpha\geq 2$. Thus, the second term is non-negative.

Together, this guarantees that the Hessian is negative definite. Thus, by the lemma, $\entropyname{\vec{Z}}{\alpha}$ is strictly concave.

\paragraph*{More general conditions} The theorem is a one-way implication. Potentially useful ways to relax the conditions include the following.
\begin{itemize}[noitemsep,topsep=0pt]
    \item The Hessian only needs to be negative definite in $\Delta_n^\circ$. Thus, we only need $\vec{x}^\top \nabla^2 \entropy{\vec{Z}}{\alpha}{\vec{p}} \vec{x} < 0$ for $\vec{x}\in\left\{ \vec{x}\in\mathbb{R}^n \mid \vec{1}^\top\vec{x}=0 \right\}\setminus\{\vec{0}\}$.
    \item Establishing bounds on the two terms of the quadratic form may help to relax the conditions on $\alpha$ (see Posada et al.'s~\cite{Posada2020} conjecture for $\alpha=1$).
\end{itemize}

\subsection{Positive definite similarity matrices}

\subsubsection{From metrics}
Let $\vec{D}\in \mathbb{R}^{n\times n}$ be the distance matrix of a metric on $\mathcal{X}$. The transformation $Z_{ij}=\exp \left(-D_{ij} \right)$ has been long-studied the geometry of embeddings~\cite{Schoenberg1938} and more recently in the theory of magnitude~\cite{Meckes2013, Leinster2013}.

\begin{definition}
    A finite metric space with distance matrix $\vec{D}$ is \textbf{stably positive semidefinite} if the similarity matrix $\vec{Z}$ with entries $Z_{ij}=\exp \left( -\tau D_{ij} \right)$ is positive semidefinite for all $\tau > 0$.
    It is \textbf{stably positive definite} if $\vec{Z}\succ0$ for all $\tau > 0$.
\end{definition}

Meckes~\cite[Theorem 3.3]{Meckes2013} showed that these two definitions coincide and are equivalent to the metric space being of negative type~\cite{Schoenberg1938}.
\begin{definition}
    A finite metric space with distance matrix $\vec{D}$ is of \textbf{negative type} if
    \begin{equation}
        \vec{v}^\top \vec{Dv} \leq 0 \; \forall \; \vec{v} \in \left\{ \vec{v} \in \mathbb{R}^n \mid \vec{1}^\top\vec{v}=0 \right\}.
    \end{equation}
    Further, it is of \textbf{strictly negative type} if equality is achieved only at $\vec{v}=\vec{0}$.
\label{def:negative_type}
\end{definition}
An equivalent definition for negative type is $\vec{D}^{1/2}$ being isometric to a subset of a Hilbert space. Many useful metric spaces are of the negative type. For more examples beyond those in our Results section, we refer to Meckes~\cite[Theorem 3.6]{Meckes2013}. In the SI Appendix, we show that a broad class of similarity matrices defined using hierarchies are guaranteed to be positive definite. We also find that for a metric where each element has the same mean distance from the others, distances can be converted to similarities linearly. While this is unlikely in real systems, our proof might inspire a way to relax the condition.

\subsubsection{From other similarity matrices}

Suppose we have a similarity matrix $\vec{M} \not\succ 0$. If $\vec{M}\succeq0$, we can use the simple transformation $\vec{M}\mapsto \delta\vec{I}+(1-\delta)\vec{M}$ for $\delta\in\left(0,1\right]$. This raises the minimum eigenvalue to $\delta$ while preserving the range and symmetry conditions.

For a more principled approach that works for general $\vec{M}$, we can frame finding the nearest positive definite similarity matrix as a convex optimisation problem~\cite{Boyd2004}. The goal is to minimise the squared Frobenius norm $\mid\mid \vec{Z} - \vec{M}\mid\mid^2_F$ such that $\vec{Z}$ is a similarity matrix and $\vec{Z}\succ0$. The constraints $Z_{ij} \in [0,1]$ are so-called `box' constraints and are convex, as is $Z_{ii}=1$. If we want to ensure that the off-diagonals are strictly less than 1, we can restrict them to $[0,1-\epsilon]$ for some $\epsilon>0$. Since the set of positive definite matrices is open, $\vec{Z} \succ 0$ is not a convex constraint. In practice, we can require $\vec{Z}-\delta\vec{I} \succeq 0$ for some $\delta>0$, thereby ensuring that the smallest eigenvalue of $\vec{Z}$ is at least $\delta$. This restricts $\vec{Z}$ to be in a translation of the positive semidefinite cone, which is convex. The convex optimisation problem can be solved with semidefinite programming~\cite{Vandenberghe1996} or projection methods~\cite{Boyle1986}.

\subsection{Bregman information}

For a weighted set of distributions, a best representative is the distribution from which the mean divergence of the set is minimised. For Bregman divergences, the mean is the unique best representative~\cite{Banerjee2004}. The corresponding minimum expected divergence is Bregman information. Banerjee, Guo, and Wang proved that along with some mild regularity conditions, the mean-minimiser property characterises Bregman divergences~\cite[Theorem 4]{Banerjee2005}. Bregman information is also the Jensen gap of negative entropy. Chodrow showed that this equivalence characterises Bregman divergences~\cite[Theorem 1]{Chodrow2025}.

\subsection{Clustering}
For a partition of a set of distributions, we can write the total Bregman information as a sum of between and within-cluster components~\cite{Banerjee2004}. This is exactly the decomposition of variance used by k-means clustering and allows us to identify the contribution of each cluster. Good partitions maximise between-cluster information and the k-means-style algorithm of alternately updating centroids and cluster assignments finds locally optimal partitions. We provide code to perform clustering in our software release.

\subsection{Synthetic experiments}

\subsubsection{Recovering planted partitions}

The elements are embedded in the integer lattice in groups as shown in Fig.~\ref{fig:synthetic}. We construct $\vec{Z}$ by computing the Manhattan distance $d$ and mapping $d \mapsto \exp(-d)$. Manhattan distance is of negative type~\cite{Meckes2013}. To generate a distribution from a group, we sample $m$ elements from it with replacement. Since we need non-zero probabilities, we smooth the sampled distribution as $\vec{p} \mapsto 0.95 \,\vec{p} + 0.05\,\vec{u}$, where $\vec{u}$ is the uniform distribution over all 60 elements. 

A single run of the experiment involves clustering 30 distributions: ten each from the three groups. We perform 50 runs each for $m \in \{ 2,4,8,16\}$. We use a k-means style algorithm to cluster the distributions for $k \in [2,6]$. In Fig.~\ref{fig:synthetic}A, we report the fraction of Bregman information explained by the clustering and the Adjusted Mutual Information (AMI) with the ground truth at $k=2$ and $3$. Our software release contains the code for these experiments.

\subsubsection{Runtime compared to Optimal Transport}

For the {\small OT} solution to the all-pairs dissimilarity task, we compute the Wasserstein-1 distance as implemented by the \texttt{emd2} function~\cite{Bonneel2011} of the state-of-the-art Python library POT~\cite{Flamary2021}. For our {\small J-BD} approaches, we define dissimilarity as Jensen-Bregman divergence. In {\small OT} and {\small J-BD (naive)}, we iterate over each pair. In {\small J-BD (fast)}, we use a vectorised algorithm optimised for the task. We do not benchmark against OT algorithms that compute approximate solutions since their speed and quality are sensitive to parameters that we have no clear way of choosing.

In a run of input size $n$, we sample $n$ distributions over 50 elements uniformly at random from $\Delta_{50}^\circ$. We generate structure by sampling an embedding for each element uniformly at random from the 10-cube $\left[0,1\right]^{10}$ with the Euclidean metric. {\small OT} works directly with distances. For {\small J-BD}, we construct $\vec{Z}$ by mapping distances as $d \mapsto \exp\left(-d\right)$ and set $\alpha=2$ (see SI Appendix for $\alpha=3$). For each run, we track the time taken and the Pearson's $r$ and Kendall's $\tau$ between the {\small OT} and {\small J-BD} dissimilarities. We conduct 50 runs each for five values of $n \in \left[ 10, 10^3 \right]$. Note that the output size scales as the square of the input size.

We run all the algorithms on a standard laptop in the same Python environment with standard libraries. Our software release contains the code to run these experiments.

\subsection{The geography of occupations in England and Wales}
\subsubsection{Data}
Census 2021 data for England and Wales was collected and is maintained by the Office for National Statistics (ONS) and provided through Nomis. We obtain occupation data from Table TS064, which records the occupation of residents aged 16 years and over~\cite{NomisTS064}. We pick the Local Authority (LA) level for geography (\textit{local authorities: district / unitary (as of April 2023)} on Nomis). We obtain the geography shapefile from the ONS~\cite{ONSShapefile}.

\subsubsection{Tradable occupations}
Some occupations such as teachers, retail workers, and health service staff cater primarily to local population. The employment in these `non-tradable' occupations scales roughly with population. To characterise regions, we focus on tradable occupations, whose share of the labour force reveals what the region specialises in. Examples include manufacturing and agriculture, as well as services such as legal and finance professions. We follow the methodology of Daniotti et al.~\cite{Daniotti2025} to identify the tradable occupations in England and Wales.

We start with 104 occupations encoded at the 3 digit level of the Standard Occupational Classification (SOC) 2020. To quantify the tradability of occupation $i$, we measure correlation between the distributions of $i$'s workers and all workers over LAs. A high Pearson's $r$ indicates that $i$ employs similar shares of the labour force in each LA. We define tradability as $1-r$~\cite{Daniotti2025}. The Spearman's rank correlation between tradability and Krugman's locational Gini coefficient is 0.97, indicating that the latter would rank occupations almost identically (see SI Appendix).

Picking a threshold of 0.2, we retain 42 occupations that employ 24.2\% of the workforce (see SI Appendix Fig.~\ref{sifig:ew_tradable}). Additionally, we drop \textit{Bed and Breakfast and Guest House Owners and Proprietors} since it has no employment in several LAs. It is the smallest of the tradable occupations and only employs 7,354 people (0.026\% of the workforce). This leaves us with 41 occupations (SI Appendix Table~\ref{sitab:tradable_occs}).

\subsubsection{Similarity matrices}
We define skills similarity with the ONET mapping of occupations to importance ratings between 1 and 5 for each of 35 skills~\cite{ONETSkills}. We use the National Foundation for Educational Research (NFER) mapping to connect SOC2020 to ONET occupation codes~\cite{NFERMapping} and aggregate up to 3 digit level. Each of our occupations is linked to multiple ONET occupations and we embed them in $\mathbb{R}^{35}$ by computing their mean skill vector. We transform Euclidean distance $d$ into similarities using $d\mapsto \exp(-\tau d)$ and pick $\tau$ such that the median similarity between occupations is 0.1. For more details, see the SI Appendix and our software release.

To define co-location similarity, we compute the Revealed Comparative Advantage (RCA)~\cite{Balassa1965} of LA $l$ in occupation $i$ as
\begin{equation}
    \textnormal{RCA}(i, l) = \frac{n_{il}/n_{\cdot l}}{n_{i\cdot}/n_{\cdot\cdot}},
\end{equation}
where $n_{il}$ is the number of people in $l$ that work in $i$ and $\cdot$ indicates a sum over the index. $\textnormal{RCA}(i, l)$ is the ratio of $i$'s share in the workforce of $l$ to $i$'s share of the overall workforce. We consider occupations as similar if LAs have similar RCA in them. To avoid noise from small population sizes, we restrict the comparison to the 100 most populous LAs. We embed each occupation in $\mathbb{R}^{100}$ as the vector of the log RCA of the LAs in it. We transform Euclidean distance $d$ into similarities using $d\mapsto \exp(-\tau d)$ with median similarity 0.1.

\subsubsection{Clustering}
We employ a k-means-style algorithm for clustering. Each LA is modelled as its distribution over the 41 tradable occupations and weighed by its share of the population. Starting from $k$ initial cluster centres, we alternatively update cluster assignments and centres until convergence. We report the best partition out of 100 trials with random initialisations. Our clustering is not explicitly geographical: we do not require clusters to be contiguous. See Chodrow~\cite{Chodrow2017} for Bregman information-based regionalisation with spatial constraints.

\subsection{$\beta$-diversity of vegetation in the Rutor Glacier}
\subsubsection{Data}
We use data on the abundance and traits of vegetation in the Rutor glacier in Italy. The data was originally collected by Caccianiga et al.~\cite{Caccianiga2006} and we accessed it using the R package adiv~\cite{Pavoine2020}. The data consists of the relative abundances of 45 species of plants in 59 plots along a primary succession. Based on the age of the glacial deposits, Caccianiga et al. grouped the plots into three successional stages: early (17), mid (32), and late (10). The species traits we use are canopy height, leaf dry mass content, leaf dry weight, specific leaf area, leaf nitrogen content, and leaf carbon content. For more details, see Caccianiga et al.~\cite{Caccianiga2006}.

\subsubsection{Similarity matrix}
We follow Ricotta et al.~\cite{Ricotta2021} exactly to define species similarity. We standardise the six traits listed above to zero mean and unit standard deviation. We compute the Euclidean distance matrix between the species and linearly re-scale it such that the maximum distance is 1. We map these distances into similarities using $d\mapsto1-d$. In this case, the mapping produces a positive definite $\vec{Z}$.

\section*{Data and software availability} We provide code to implement our methods and reproduce our experiments at \url{https://github.com/rohit-sahasrabuddhe/structure-aware-divergence}. All the data we use is public and included in the repository.

\section*{Acknowledgements}
We thank Karel Devriendt for many useful discussions. We are grateful to the teams maintaining the open source data and software that we use. R.S. is funded by the Mathematical Institute at the University of Oxford. R.L. acknowledges support from the EPSRC grants EP/V013068/1, EP/V03474X/1, and EP/Y028872/1.

\section*{Author contributions.} 
R.S.: Conceptualisation, Methodology, Software, Investigation, Writing - Original Draft, Writing - Review and Editing.

R.L.: Conceptualisation, Investigation, Writing - Review and Editing.

\bibliographystyle{unsrt}
\bibliography{000references}

\onecolumngrid
\appendix
\renewcommand{\thefigure}{S\arabic{figure}}
\renewcommand{\thetable}{S\arabic{table}}
\renewcommand{\theproposition}{S\arabic{proposition}}
\setcounter{figure}{0}
\setcounter{proposition}{0}
\setcounter{table}{0}

\clearpage

\section{Entropy}
In the main text, we claimed that picking $\alpha$ can be viewed as choosing the sensitivity to rare elements. First, we will show that increasing $\alpha$ decreases the surprise of every element and therefore the entropy. Let $\sigma:(0,1) \to\mathbb{R}$ be the surprise function.
\begin{equation}
    \sigma_\alpha(x) = \begin{cases}
        \frac{1}{\alpha - 1} \left( 1 - x^{\alpha-1} \right) & \textnormal{for } \alpha \neq 1,\\
        - \ln x & \textnormal{for } \alpha=1
    \end{cases}
\end{equation}

\begin{proposition}
    $\alpha_1 \geq \alpha_2 \implies \sigma_{\alpha_1} \leq \sigma_{\alpha_2}$
\end{proposition}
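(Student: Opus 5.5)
Fix $x\in(0,1)$ and view the surprise as a function of the order, $g(\beta)=\sigma_{\beta}(x)$ for $\beta\in[0,\infty)$. The cleanest route is an integral representation that is valid for every $\beta$, including $\beta=1$:
\begin{equation*}
    \sigma_\beta(x) = \int_x^1 t^{\beta-2}\,\textnormal{d}t .
\end{equation*}
For $\beta\neq1$ the integral equals $\frac{1-x^{\beta-1}}{\beta-1}$. For $\beta=1$ it equals $\int_x^1 t^{-1}\,\textnormal{d}t=-\ln x$. So the two cases of the definition of $\sigma_\alpha$ are unified, and no limiting argument at $\alpha=1$ is needed.

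Now take $\alpha_1\geq\alpha_2$. For $t\in(0,1)$ and $\alpha_1-2\geq\alpha_2-2$ we have $t^{\alpha_1-2}\leq t^{\alpha_2-2}$, since $t\mapsto t^{s}$ with $t<1$ is nonincreasing in the exponent $s$. Integrating this pointwise inequality over $[x,1]$, whose orientation is positive because $x<1$, gives $\sigma_{\alpha_1}(x)\leq\sigma_{\alpha_2}(x)$ for every $x\in(0,1)$, which is the claim. Strictness also follows when $\alpha_1>\alpha_2$, because the integrand inequality is then strict on a set of positive measure.

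The only point that needs care is checking that the integral representation matches both branches of the definition, in particular the case $\beta=1$. A fallback is to differentiate $g$ in $\beta$ and show $g'\leq0$ using $e^{u}\geq1+u$ with $u=(\beta-1)\ln x$. That route requires handling the removable singularity at $\beta=1$ separately, so I prefer the integral argument.

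Finally, the consequence for entropy stated earlier follows immediately. Since $\entropy{\vec{Z}}{\alpha}{\vec{p}}=\sum_i p_i\,\sigma_\alpha((\vec{Zp})_i)$ with $p_i\geq0$ and $(\vec{Zp})_i\in(0,1]$, the pointwise inequality passes through the nonnegative weighted sum. At $x=1$ every $\sigma_\alpha(1)=0$, so that endpoint is trivial.
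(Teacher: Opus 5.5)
Your proof is correct, but it takes a different route from the paper. The paper treats only the $\alpha\neq1$ branch and computes $\frac{d}{d\alpha}\sigma_\alpha(x)=\frac{1}{(\alpha-1)^2}\bigl(x^{\alpha-1}(1-\ln x^{\alpha-1})-1\bigr)$. It shows the bracket is negative and then extends to $\alpha=1$ ``by continuity''. This is essentially your fallback: with $u=(\alpha-1)\ln x$ the bracket is $e^{u}(1-u)-1\le0$.

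Your representation $\sigma_\alpha(x)=\int_x^1 t^{\alpha-2}\,\mathrm{d}t$ buys three things:
\begin{itemize}
\item both branches are handled at once, with no removable singularity or continuity patch at $\alpha=1$;
\item strictness for $\alpha_1>\alpha_2$ comes at no extra cost;
\item the whole argument reduces to the elementary fact that $s\mapsto t^s$ is decreasing for $t\in(0,1)$.
\end{itemize}
It also avoids a small looseness in the paper's justification. For $\alpha<1$ one has $x^{\alpha-1}>1$, so the bracket is negative not simply ``since $x\in(0,1)$''. The real reason is that $y(1-\ln y)$ attains its maximum $1$ only at $y=1$.

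What the paper's route buys is an explicit closed form for $\partial_\alpha\sigma_\alpha(x)$. It then differentiates this in $x$ for the companion claim that rarer elements are affected more. Your representation gives that too: $\partial_\alpha\sigma_\alpha(x)=\int_x^1 t^{\alpha-2}\ln t\,\mathrm{d}t$ has $x$-derivative $-x^{\alpha-2}\ln x>0$, again with no special case at $\alpha=1$.

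One small point in your closing remark about entropy: $(\vec{Zp})_i>0$ is guaranteed only when $p_i>0$, because $(\vec{Zp})_i\ge Z_{ii}p_i=p_i$. Those are the only terms that contribute, so the conclusion stands.
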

\begin{proof}
    For simplicity, we will only consider the $\alpha\neq1$ case. The results can be generalised by continuity.

    \begin{align*}
        \frac{d}{d\alpha} \sigma_\alpha(x) &= -\frac{1}{\alpha-1} x^{\alpha-1}\ln x - \frac{1}{(\alpha-1)^2}\left( 1 - x^{\alpha-1} \right)\\
        &= \frac{1}{(\alpha-1)^2} \left( x^{\alpha-1} \left( 1 - \ln x^{\alpha-1} \right) - 1 \right),
    \end{align*}
    where we substituted $\ln x = \frac{1}{\alpha-1} \ln x^{\alpha-1}$. Now, since $x\in(0,1)$, $x^{\alpha-1} \left( 1 - \ln x^{\alpha-1} \right) < 1$, and therefore, $\frac{d}{d\alpha} \sigma_\alpha<0$.
\end{proof}

The surprise of every element decreases when $\alpha$ increases, which proves the Proposition in the main text. Next, we will show that the rarer elements are more affected.

\begin{proposition}
    $x_1 < x_2 \implies \frac{d}{d\alpha} \sigma_\alpha(x_1) < \frac{d}{d\alpha} \sigma_\alpha(x_2)$
\end{proposition}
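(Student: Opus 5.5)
The plan is to show that the mixed partial derivative $\frac{\partial}{\partial x}\frac{\partial}{\partial \alpha}\sigma_\alpha(x)$ is strictly positive on $(0,1)$. Strict monotonicity of $x\mapsto \frac{d}{d\alpha}\sigma_\alpha(x)$ then follows by the mean value theorem. As in the previous proposition, I will treat $\alpha\neq1$ directly and recover $\alpha=1$ by continuity.

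\textbf{Starting point.} Take the expression already obtained in the proof of the previous proposition,
\begin{equation*}
    \frac{d}{d\alpha}\sigma_\alpha(x)=\frac{1}{(\alpha-1)^2}\left( g\!\left(x^{\alpha-1}\right)-1\right), \qquad g(y)=y\left(1-\ln y\right).
\end{equation*}
One could argue by monotonicity of $g$ composed with $y=x^{\alpha-1}$. We have $g'(y)=-\ln y$, which is positive for $y\in(0,1)$. However, $y=x^{\alpha-1}$ is increasing in $x$ only when $\alpha>1$. For $\alpha<1$ it is decreasing, and then $y>1$, so $g'(y)<0$. The two sign flips would need to be tracked separately, so I will avoid this route.

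\textbf{Direct differentiation.} Instead I differentiate in $x$ directly with the chain rule:
\begin{equation*}
    \frac{\partial}{\partial x}\frac{d}{d\alpha}\sigma_\alpha(x)=\frac{1}{(\alpha-1)^2}\left(-\ln x^{\alpha-1}\right)(\alpha-1)x^{\alpha-2}=-x^{\alpha-2}\ln x .
\end{equation*}
The factors of $(\alpha-1)$ cancel completely, so the same expression holds for every $\alpha\neq1$. For $x\in(0,1)$ we have $\ln x<0$ and $x^{\alpha-2}>0$, so the expression is strictly positive. Integrating from $x_1$ to $x_2$ then gives the claimed strict inequality.

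\textbf{The case $\alpha=1$.} The only delicate point is this case, where the formula above was derived under $\alpha\neq1$. I would handle it in one of two ways:
\begin{itemize}
    \item Compute $\frac{d}{d\alpha}\sigma_\alpha(x)\big|_{\alpha=1}$ as a limit. Expanding $x^{\alpha-1}=e^{(\alpha-1)\ln x}$ to second order gives $-\tfrac12(\ln x)^2$. Its $x$-derivative is $-x^{-1}\ln x>0$, matching $-x^{\alpha-2}\ln x$ at $\alpha=1$.
\item Alternatively, note that $\sigma_\alpha(x)=\int_0^{1}\ldots$ Write $\sigma_\alpha(x)=-\int_1^{x}t^{\alpha-2}\,dt$, which is valid for all $\alpha$ including $\alpha=1$. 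Differentiating under the integral sign then shows directly that $\partial_x\partial_\alpha\sigma_\alpha(x)=-x^{\alpha-2}\ln x$ uniformly in $\alpha$.
\end{itemize}
I would use the integral representation, since it handles every $\alpha$ at once.
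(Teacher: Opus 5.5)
Your proof is correct and is essentially the paper's argument: both compute the mixed partial $\frac{\partial}{\partial x}\frac{d}{d\alpha}\sigma_\alpha(x)=-x^{\alpha-2}\ln x$ and read off its sign on $(0,1)$. Your representation $\sigma_\alpha(x)=-\int_1^x t^{\alpha-2}\,dt$ handles $\alpha=1$ more cleanly than the paper's appeal to continuity, and you get the sign right (strictly positive, as the statement requires), whereas the paper's proof ends with ``$<0$'', which is a slip; do delete your stray fragment ``$\sigma_\alpha(x)=\int_0^{1}\ldots$''.
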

\begin{proof}
    \begin{align*}
        \frac{d}{dx} \left( \frac{d}{d\alpha} \sigma_\alpha(x) \right) = -x^{\alpha-2} \ln x
    \end{align*}
    Since $0<x<1$, we have $x^{\alpha-2} >0$ and $\ln x<0$. Therefore, $\frac{d}{dx} \left( \frac{d}{d\alpha} \sigma_\alpha(x) \right) < 0$.
\end{proof}

\section{Positive Definite Similarity Matrices}
\subsection{Special case: Hierarchies}
Suppose we want to define the similarities from a hierarchy. Let us represent the levels of the hierarchy as integers $L =\left\{ 0,1,\dots,l\right\}$, where $0$ is the highest level with a single group and $l$ is the level that the elements of $\mathcal{X}$ belong to. 

An intuitive way to construct similarities is from the level at which elements become identical. Let $h(i,j) \in L$ be the level of the lowest common ancestor of the $i^\textnormal{th}$ and $j^\textnormal{th}$ elements. Now consider $f:L\to\left[0,1\right]$ that maps levels to a similarity value. Since $h(i,j)=l$ if and only if $i=j$, we require $f(l)=1$. Intuitively, elements closer in the hierarchy are more similar, which corresponds to $l_1\geq l_2 \implies f(l_1)\geq f(l_2)$. A matrix $\vec{Z}$ with elements $Z_{ij}=f\left(h(i,j)\right)$ is guaranteed to be a positive definite similarity matrix.

To prove this, we can represent the hierarchy as a rooted tree and show that each choice of $f$ corresponds to a distribution of edge weights such that the $d \mapsto \exp(-d)$ transformation of the shortest path distance yields the same $\vec{Z}$. In this construction, the weight of all the edges between levels $l-k$ and $l-k+1$ are identical, and we denote it as $d_{l-k,\,l-k+1}$. Setting 
\begin{equation}
    d_{l-k, \,l-k+1}=\frac{1}{2} \left( \ln \frac{f(l-k+1)}{f(l-k)} \right)    
\end{equation}
provides the desired $\vec{Z}$. Thus, any reasonable measure of similarity that is constructed from a hierarchy is compatible with our methods. Hierarchical definitions of similarity are convenient in many real systems. For instance, occupation is usually reported using hierarchical classification schemes such as the International Standard Classification of Occupations (ISCO) and the Standard Occupation Classification (SOC), as are industries in the North American Industry Classification System (NAICS) or International Standard Industrial Classification of All Economic Activities (ISIC).

\subsection{Special case: Linear transformation from distance to similarity}
\begin{proposition}
    Consider a finite metric space of strictly negative type with distance matrix $\vec{D}$ and the similarity matrix $\vec{Z}$ with entries 
    \begin{equation*}
        Z_{ij}=1-\frac{D_{ij}}{\max_{i,j}D_{ij}}.
    \end{equation*}
    $\vec{Z} \succ 0$ if $\vec{1}$ is an eigenvector of $\vec{D}$.
\label{prop:1-d}
\end{proposition}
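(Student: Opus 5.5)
The plan is to write $\vec{Z}$ in matrix form and split the quadratic form along $\vec{1}$ and its orthogonal complement, the subspace on which negative type gives information. Let $M=\max_{i,j}D_{ij}>0$; this is positive whenever $n\geq2$, and the case $n=1$ is trivial. Then
\begin{equation*}
\vec{Z}=\vec{1}\vec{1}^\top-\frac{1}{M}\vec{D}.
\end{equation*}
By hypothesis $\vec{D}\vec{1}=r\vec{1}$ for some $r$. Since the diagonal of $\vec{D}$ is zero, $r$ is the common row sum $\sum_j D_{ij}$, and so $0\leq r\leq (n-1)M$.

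I will decompose any $\vec{x}\in\mathbb{R}^n$ as $\vec{x}=c\vec{1}+\vec{v}$ with $\vec{1}^\top\vec{v}=0$.

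\textbf{Step 1: the cross terms vanish.} Because $\vec{D}$ is symmetric and $\vec{1}$ is an eigenvector, we have $\vec{1}^\top\vec{D}\vec{v}=(\vec{D}\vec{1})^\top\vec{v}=r\,\vec{1}^\top\vec{v}=0$. Likewise $\vec{1}^\top\vec{v}=0$ kills the cross terms of $\vec{1}\vec{1}^\top$. This gives
\begin{equation*}
\vec{x}^\top\vec{Z}\vec{x}=c^2 n\left(n-\frac{r}{M}\right)-\frac{1}{M}\,\vec{v}^\top\vec{D}\vec{v}.
\end{equation*}

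\textbf{Step 2: sign of each term.} The first coefficient satisfies $n-r/M\geq n-(n-1)=1>0$, so the first term is nonnegative and vanishes only if $c=0$. By strict negative type (Definition~\ref{def:negative_type}), $\vec{v}^\top\vec{D}\vec{v}<0$ whenever $\vec{v}\neq\vec{0}$ with $\vec{1}^\top\vec{v}=0$, so the second term is nonnegative and vanishes only if $\vec{v}=\vec{0}$. Hence $\vec{x}^\top\vec{Z}\vec{x}>0$ for all $\vec{x}\neq\vec{0}$, that is, $\vec{Z}\succ0$.

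\textbf{Step 3: $\vec{Z}$ is a similarity matrix.} The entries lie in $[0,1]$, the matrix is symmetric, and $Z_{ii}=1$ because $D_{ii}=0$.

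The only delicate point is Step 1. Without the eigenvector hypothesis, the mixed term $2c\,\vec{1}^\top\vec{D}\vec{v}$ survives and can be negative, which is exactly why the hypothesis is needed. The rest is routine bookkeeping.
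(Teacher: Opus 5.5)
Your proof is correct and follows the paper's argument. You decompose $\vec{x}=c\vec{1}+\vec{v}$ with $\vec{1}^\top\vec{v}=0$, use the eigenvector hypothesis to kill the cross term, and use strict negative type to make the $\vec{v}$-part positive; the only cosmetic difference is that you control the $\vec{1}$-direction via the row-sum bound $r\leq(n-1)M$, whereas the paper uses the equivalent bound $\vec{1}^\top\vec{Z}\vec{1}\geq n$ from $Z_{ii}=1$ and $Z_{ij}\geq 0$ (and phrases the hypothesis as $\vec{Z}\vec{1}=\lambda\vec{1}$, which is equivalent since $\vec{Z}=\vec{1}\vec{1}^\top-\vec{D}/M$).
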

\begin{proof}
In the following, we will use $\vec{D}$ to refer to the rescaled distance matrix. Rescaling does not change the sign of eigenvalues, so $\vec{D}$ remains of negative type. Since $\vec{D}$ is of strictly negative type,
\begin{equation}
    \vec{v}^\top \vec{Dv} < 0 \; \forall \; \vec{v} \in \left\{ \vec{v} \in \mathbb{R}^n \mid \vec{1}^\top\vec{v}=0 \right\} \setminus \left\{ \vec{0}\right\}.
\end{equation}

For $\vec{Z}=\vec{11}^\top - \vec{D}$, we aim to show that
\begin{equation}
    \vec{x}^\top \vec{Zx} > 0 \; \forall\; \vec{x} \in \mathbb{R}^n \setminus \left\{\vec{0}\right\}.
\end{equation}

We begin by noting that $\left\{ \vec{v} \in \mathbb{R}^n \mid \vec{1}^\top\vec{v}=0 \right\} = \textnormal{span}\left(\vec{1}\right)^\perp$. Thus, we can write any $\vec{x}\in\mathbb{R}^n$ as $\vec{x}=\vec{v} + a\vec{1}$ for $\vec{v}\in\textnormal{span}\left(\vec{1}\right)^\perp$ and $a\in\mathbb{R}$. Substituting this in the quadratic form above, we get the following.
\begin{align*}
    \vec{x}^\top \vec{Zx} &= \left( \vec{v} + a\vec{1}\right)^\top \left( \vec{11}^\top - \vec{D} \right) \left( \vec{v} + a\vec{1} \right)\\
    &= \vec{v}^\top \vec{Z} \vec{v} + 2a \vec{v}^\top\vec{Z} \vec{1} + a^2\vec{1}^\top \vec{Z} \vec{1}
\end{align*}

Since $\vec{v}^\top \vec{11}^\top \vec{v}=0$, the first term equals $-\vec{v}^\top \vec{D} \vec{v} > 0$ since $\vec{D}$ is of strictly negative type. $\vec{1}^\top \vec{Z} \vec{1} \geq n$ since $Z_{ii}=1$ and $Z_{ij}\geq 0$. Therefore, $a^2\vec{1}^\top \vec{Z} \vec{1} \geq na^2 \geq 0$. Thus, when the cross-term $2a \vec{v}^\top\vec{Z} \vec{1}$ is non-negative, $\vec{x}^\top \vec{Zx} > 0$ as desired. When $\vec{1}$ is an eigenvalue of $\vec{Z}$, we can substitute $\vec{Z1}=\lambda \vec{1}$ in the cross-term to get $2a\lambda \vec{v}^\top\vec{1} = 0$. This concludes the proof.
\end{proof}

We emphasise that this is a one-way implication and not an equivalence. The Rutor glacier analysis shows that the conditions are not necessary.

\section{Runtime compared to OT}
The comparison in the main paper also holds for $\alpha=3$ (Fig.~\ref{sifig:timetrial_3}).
\begin{figure}[!ht]
    \centering
    \includegraphics{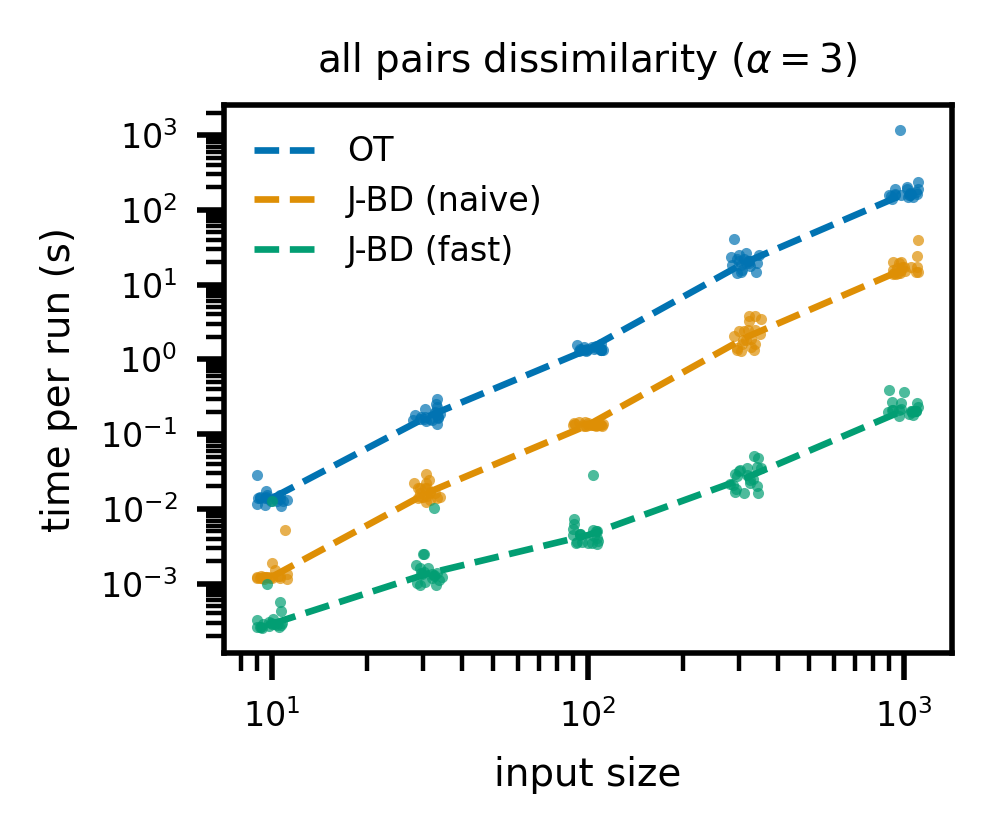}
    \caption{\textbf{Comparing runtime with OT for $\alpha=3$.}The runtime in seconds as a function of the number of input distributions for the all-pairs dissimilarity task. The points mark the results from 20 runs each for {\small OT} (blue), {\small J-BD (naive)} (orange), and {\small J-BD (fast)} (green). The lines connect the medians. Note the log scale on both axes.}
    \label{sifig:timetrial_3}
\end{figure}

\section{The geography of occupations in England and Wales}

\subsection{Tradable occupations}
Our analysis focuses on `tradable' occupations that likely create outputs that are not solely for the local population. Our method relying on correlations yields a similar ranking of tradability as locational Gini (Fig.~\ref{sifig:ew_tradable}). We identify 41 tradable occupations (Table~\ref{sitab:tradable_occs}).

\begin{figure}[h!]
    \centering
    \includegraphics{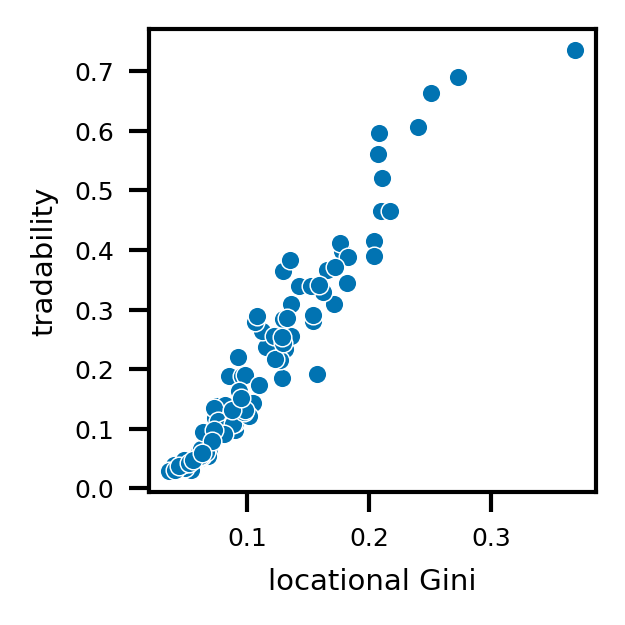}
    \includegraphics{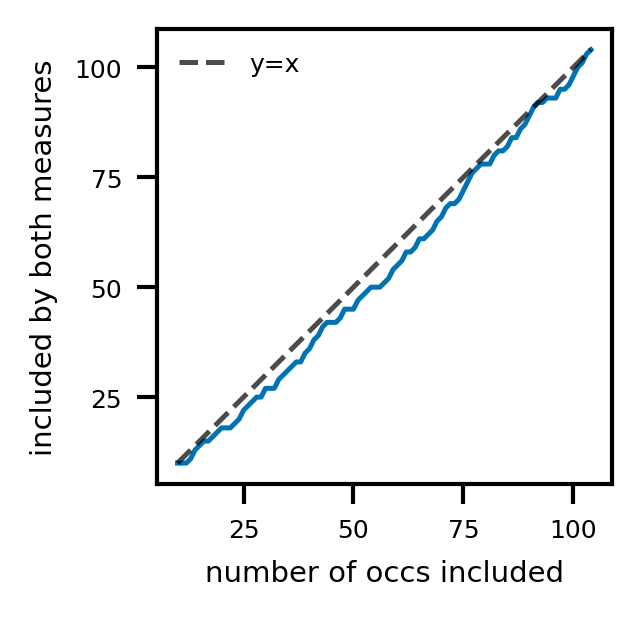}
    \includegraphics{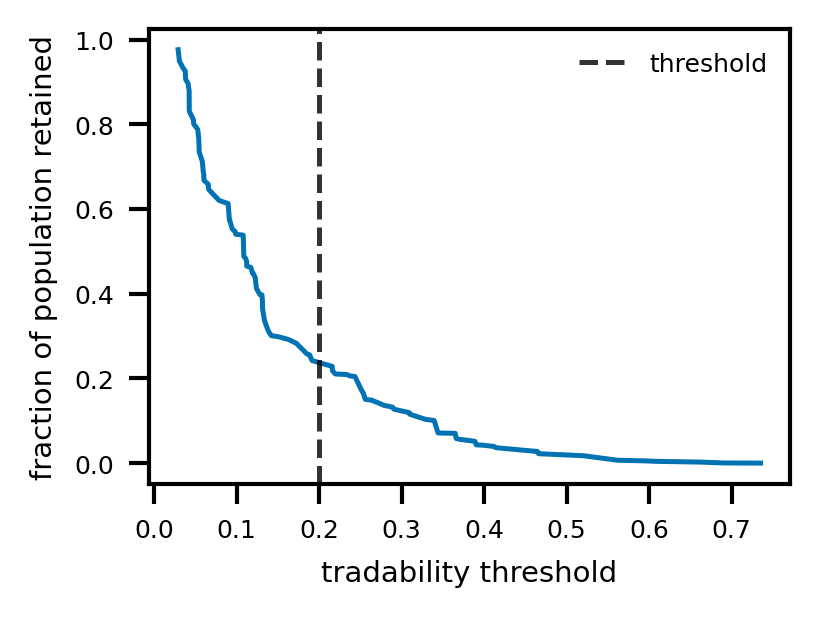}
    \caption{\textbf{Identifying tradable occupations.} (Left) Tradability vs locational Gini index. (Middle) Ranking by these indices, we plot the number of occupations that would have been included by both measures against the number of occupations included. (Right) The fraction of workforce retained as a function of the tradability threshold. We pick a threshold of 0.2, which includes occupations employing 24\% of the workforce. }
    \label{sifig:ew_tradable}
\end{figure}

\begin{table}
    \centering
    \begin{tabular}{llr}
    \toprule
    Code & Occupation & Population (\%) \\
    \midrule
    111 & Chief Executives and Senior Officials & 1.9 \\
    113 & Functional Managers and Directors & 14.6 \\
    114 & Directors in Logistics, Warehousing and Transport & 0.2 \\
    116 & Senior Officers in Protective Services & 0.7 \\
    121 & Managers and Proprietors in Agriculture Related Services & 0.8 \\
    211 & Natural and Social Science Professionals & 1.5 \\
    214 & Web and Multimedia Design Professionals & 1.4 \\
    215 & Conservation and Environment Professionals & 0.6 \\
    216 & Research and Development (R\&D) and Other Research Professionals & 1.3 \\
    224 & Veterinarians & 0.3 \\
    241 & Legal Professionals & 3.4 \\
    242 & Finance Professionals & 5.7 \\
    243 & Business, Research and Administrative Professionals & 5.1 \\
    247 & Librarians and Related Professionals & 0.3 \\
    249 & Media Professionals & 2.2 \\
    312 & CAD, Drawing and Architectural Technicians & 0.9 \\
    324 & Veterinary nurses & 0.3 \\
    331 & Protective Service Occupations & 5.2 \\
    341 & Artistic, Literary and Media Occupations & 5.1 \\
    342 & Design Occupations & 1.0 \\
    351 & Transport Associate Professionals & 0.4 \\
    353 & Finance Associate Professionals & 2.8 \\
    411 & Administrative Occupations: Government and Related Organisations & 4.1 \\
    511 & Agricultural and Related Trades & 4.4 \\
    521 & Metal Forming, Welding and Related Trades & 1.2 \\
    522 & Metal Machining, Fitting and Instrument Making Trades & 3.8 \\
    523 & Vehicle Trades & 3.3 \\
    525 & Skilled Metal, Electrical and Electronic Trades Supervisors & 0.6 \\
    541 & Textiles and Garments Trades & 0.6 \\
    542 & Printing Trades & 0.4 \\
    612 & Animal Care and Control Services & 1.2 \\
    811 & Process Operatives & 3.7 \\
    812 & Metal Working Machine Operatives & 0.7 \\
    813 & Plant and Machine Operatives & 1.3 \\
    814 & Assemblers and Routine Operatives & 2.8 \\
    816 & Production, Factory and Assembly Supervisors & 0.9 \\
    822 & Mobile Machine Drivers and Operatives & 1.9 \\
    823 & Other Drivers and Transport Operatives & 1.1 \\
    911 & Elementary Agricultural Occupations & 0.8 \\
    913 & Elementary Process Plant Occupations & 3.4 \\
    925 & Elementary Storage Occupations & 8.2 \\
    \bottomrule
    \end{tabular}
    \caption{\textbf{Tradable occupations in England and Wales.} The 41 occupations in our analysis along with their share of the population. The occupations are classified by SOC 2020.}
    \label{sitab:tradable_occs}
\end{table}

\subsection{Similarity matrices}
In the main paper, we analysed three similarity matrices, modelling no similarity, skills similarity, and co-location similarity. Here, we also include a fourth: hierarchy similarity, which is defined using the occupation classification taxonomy. The code for constructing them is available on our GitHub repository. See Fig.~\ref{sifig:ew_Zs}.

\begin{figure}
    \centering
    \includegraphics{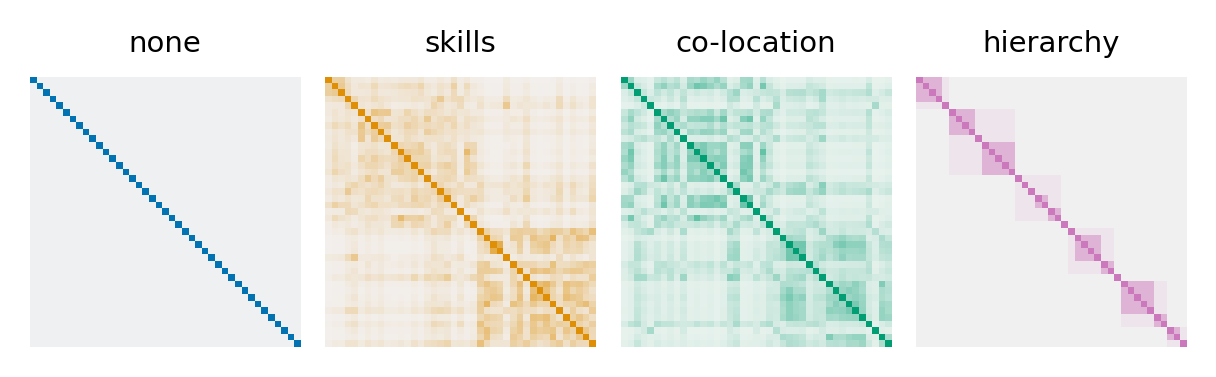}
    \caption{\textbf{Similarity matrices.}}
    \label{sifig:ew_Zs}
\end{figure}

\paragraph{Skills similarity} We use the ONET mapping of occupations to skills, which assigns an importance score between 1 and 5 to each of the 35 skills. Each of our 3 digit SOC2020 occupations are matched to multiple ONET occupations by the NFER's mapping. We assign a skill vector to our occupations by taking the mean of the matched ONET occupations'. Of the 394 ONET occupations in the matching, 20 do not have skills data. This only affects 12 of our occupations, each of which have multiple other ONET occupations to estimate skills from (see software release for details).

The skills are Reading Comprehension, Active Listening, Writing, Speaking, Mathematics, Science, Critical Thinking, Active Learning, Learning Strategies, Monitoring, Social Perceptiveness, Coordination, Persuasion, Negotiation, Instructing, Service Orientation, Complex Problem Solving, Operations Analysis, Technology Design, Equipment Selection, Installation, Programming, Operations Monitoring, Operation and Control, Equipment Maintenance, Troubleshooting, Repairing, Quality Control Analysis, Judgment and Decision Making, Systems Analysis, Systems Evaluation, Time Management, Management of Financial Resources, Management of Material Resources, and Management of Personnel Resources.

\paragraph{Hierarchy similarity} Occupations have a similarity of 0.5 if they are identical at 2 digit level, 0.1 if they are identical at 1 digit level, and 0 otherwise.

\subsection{Diversity and representativeness}
In the main paper, we reported values for the correlations of rankings by entropy and divergence from the population's occupation composition for $\alpha=2$. The differences are also apparent for other values of $\alpha$ (Tables~\ref{sitab:kendall_LAs_diversity} and \ref{sitab:kendall_LAs_representativeness}).
\begin{table}[]
    \centering
    \begin{tabular}{lrrrr}
    \toprule
     $\alpha=2$ & none & skills & co-location & hierarchy \\
    \midrule
    none & 1.000 & 0.631 & 0.677 & 0.811 \\
    skills & 0.631 & 1.000 & 0.536 & 0.640 \\
    co-location & 0.677 & 0.536 & 1.000 & 0.782 \\
    hierarchy & 0.811 & 0.640 & 0.782 & 1.000 \\
    \bottomrule
    \end{tabular}

    \begin{tabular}{lrrrr}
    \toprule
     $\alpha=3$ & none & skills & co-location & hierarchy \\
    \midrule
    none & 1.000 & 0.585 & 0.642 & 0.807 \\
    skills & 0.585 & 1.000 & 0.523 & 0.629 \\
    co-location & 0.642 & 0.523 & 1.000 & 0.764 \\
    hierarchy & 0.807 & 0.629 & 0.764 & 1.000 \\
    \bottomrule
    \end{tabular}

    \begin{tabular}{lrrrr}
    \toprule
     $\alpha=4$ & none & skills & co-location & hierarchy \\
    \midrule
    none & 1.000 & 0.566 & 0.634 & 0.812 \\
    skills & 0.566 & 1.000 & 0.510 & 0.627 \\
    co-location & 0.634 & 0.510 & 1.000 & 0.754 \\
    hierarchy & 0.812 & 0.627 & 0.754 & 1.000 \\
    \bottomrule
    \end{tabular}
    \caption{\textbf{Kendall's $\tau$ between rankings of LAs by diversity.}}
    \label{sitab:kendall_LAs_diversity}
\end{table}

\begin{table}[]
    \centering
    \begin{tabular}{lrrrr}
    \toprule
     $\alpha=2$ & none & skills & co-location & hierarchy \\
    \midrule
    none & 1.000 & 0.787 & 0.799 & 0.888 \\
    skills & 0.787 & 1.000 & 0.899 & 0.871 \\
    co-location & 0.799 & 0.899 & 1.000 & 0.882 \\
    hierarchy & 0.888 & 0.871 & 0.882 & 1.000 \\
    \bottomrule
    \end{tabular}
    
    \begin{tabular}{lrrrr}
    \toprule
     $\alpha=3$ & none & skills & co-location & hierarchy \\
    \midrule
    none & 1.000 & 0.759 & 0.727 & 0.848 \\
    skills & 0.759 & 1.000 & 0.904 & 0.872 \\
    co-location & 0.727 & 0.904 & 1.000 & 0.847 \\
    hierarchy & 0.848 & 0.872 & 0.847 & 1.000 \\
    \bottomrule
    \end{tabular}
    
    \begin{tabular}{lrrrr}
    \toprule
     $\alpha=4$ & none & skills & co-location & hierarchy \\
    \midrule
    none & 1.000 & 0.741 & 0.695 & 0.838 \\
    skills & 0.741 & 1.000 & 0.883 & 0.847 \\
    co-location & 0.695 & 0.883 & 1.000 & 0.816 \\
    hierarchy & 0.838 & 0.847 & 0.816 & 1.000 \\
    \bottomrule
    \end{tabular}
    \caption{\textbf{Kendall's $\tau$ between rankings of LAs by representativeness.}}
    \label{sitab:kendall_LAs_representativeness}
\end{table}

\subsection{Regionalisation}

We plot the maps depicting the optimal clusters for $k=2$ and $3$ in Figs~\ref{sifig:ew_k2_maps} and \ref{sifig:ew_k3_maps} respectively.

\begin{figure}
    \centering
    \includegraphics{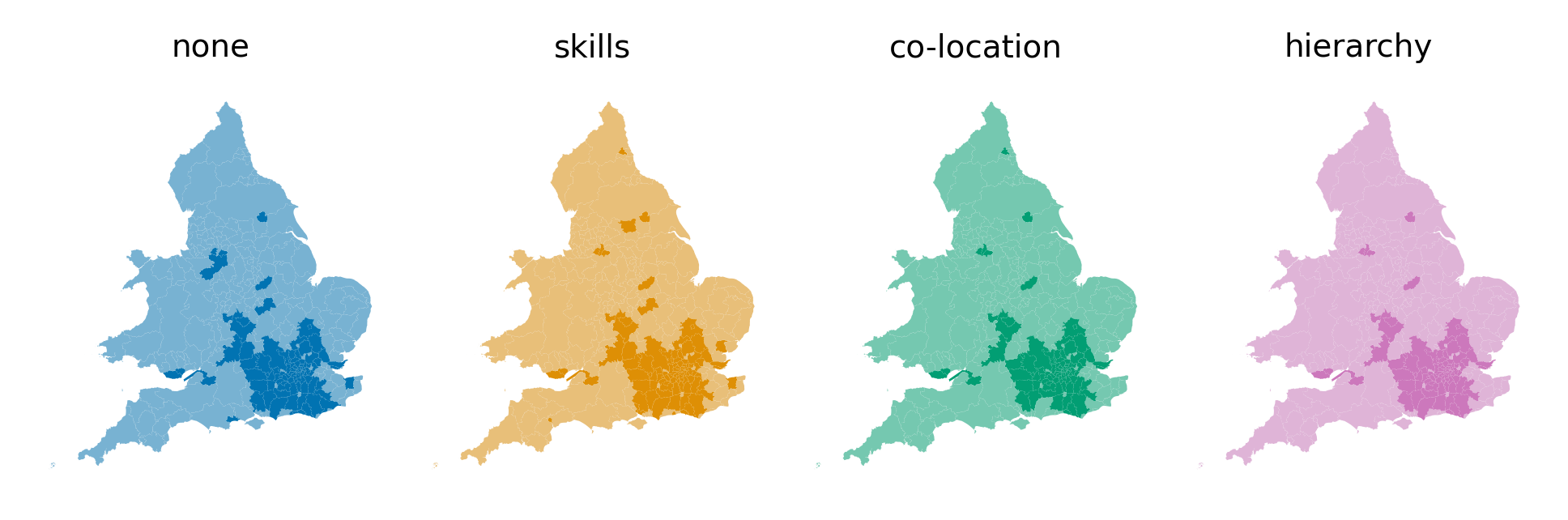}
    \caption{\textbf{Geography of the optimal two-way partition by occupational composition.} The partitions capture the concentration of white collar occupations in the south.}
    \label{sifig:ew_k2_maps}
\end{figure}

\begin{figure}
    \centering
    \includegraphics{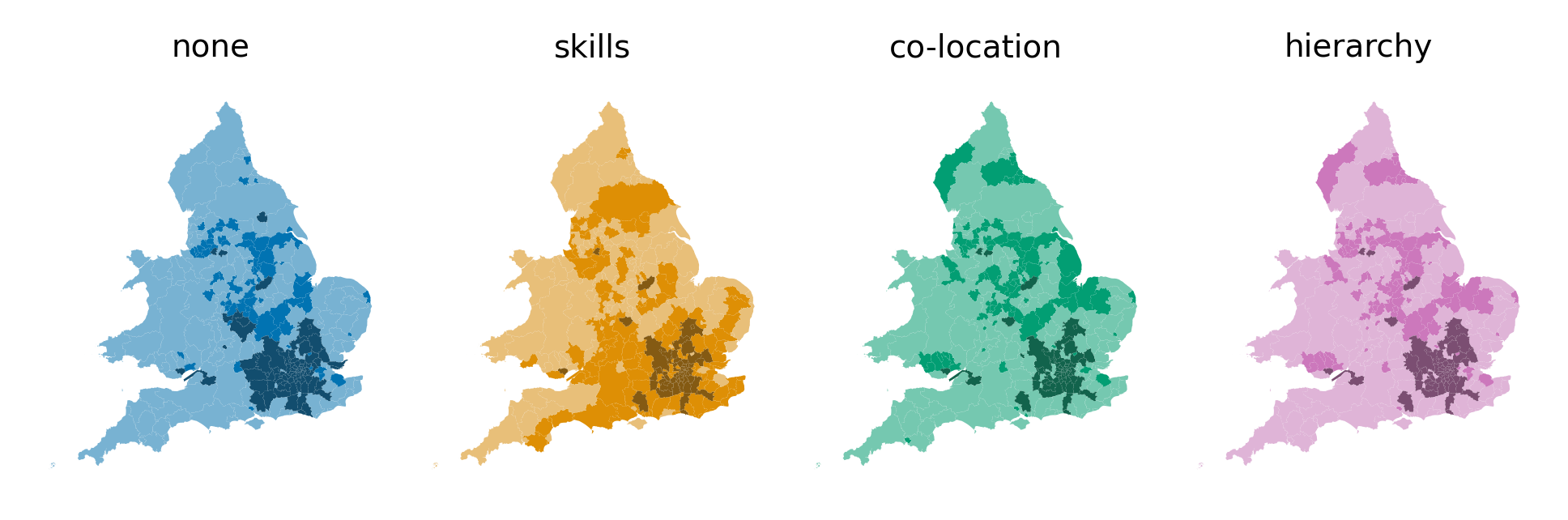}
    \caption{\textbf{Geography of the optimal three-way partition by occupational composition.} These partitions are discussed in the main paper.}
    \label{sifig:ew_k3_maps}
\end{figure}

\section{$\beta$-diversity of vegetation in the Rutor Glacier}

The results in the main paper are qualitatively robust to $\alpha=3$ and $4$ (Fig.~\ref{sifig:rutor_beta_diversity}).

\begin{figure}
    \centering
    \includegraphics{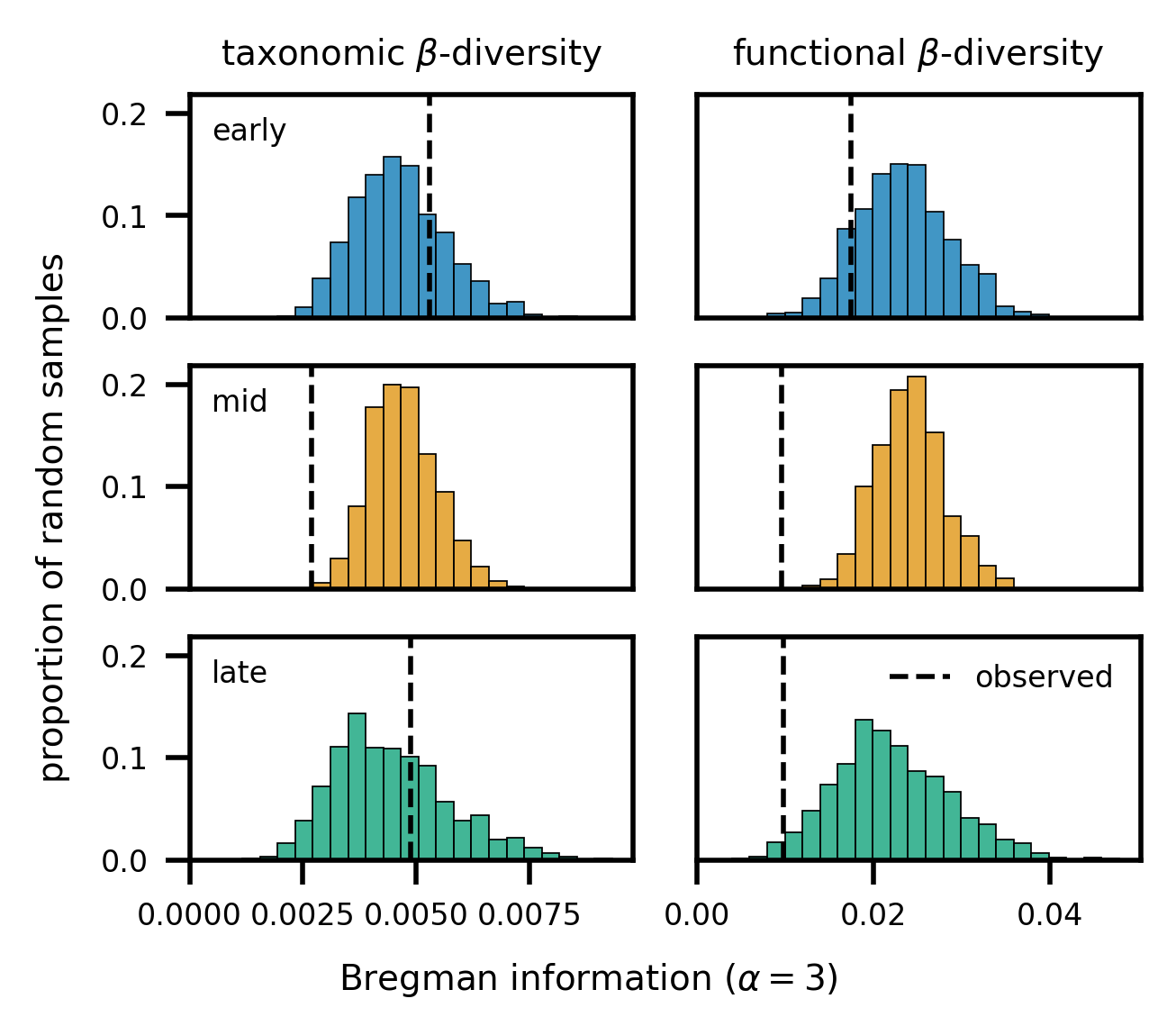}
    \includegraphics{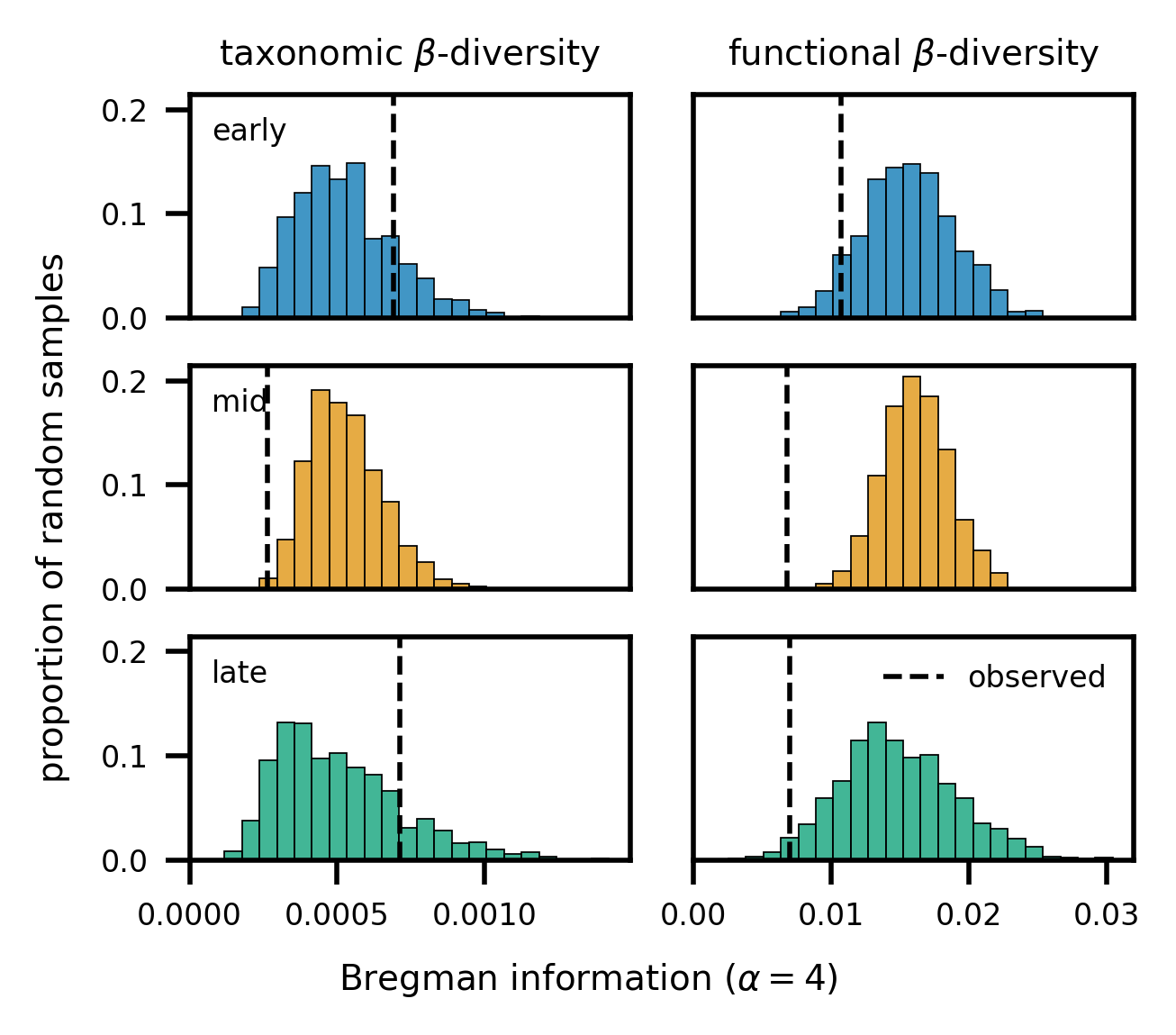}
    \caption{\textbf{$\beta$-diversity of the successional stages of the Rutor glacier.} These plots show the results reported in Fig.~\ref{fig:rutor_beta_diversity} for $\alpha=3$ and $4$.}
    \label{sifig:rutor_beta_diversity}
\end{figure}

\end{document}